\documentclass[%
 reprint, amsmath,amssymb,superscriptaddress,
 aps]{revtex4-1}
\usepackage{enumerate}
\usepackage{amsmath}
\usepackage{booktabs}
\usepackage{array}
\usepackage{amsthm}
\usepackage{float}
\setlength{\textfloatsep}{10pt}
\usepackage{dcolumn}
\usepackage[colorlinks=true, citecolor=blue, urlcolor=blue ]{hyperref}
\usepackage{bm}
\usepackage{longtable}
\usepackage{graphicx}
\usepackage{epsfig}
\usepackage{epstopdf}
\usepackage{amssymb}
\usepackage{color}
\usepackage{makecell}

\allowdisplaybreaks[4]


\begin{document}

\preprint{APS/123-QED}

\title{Genuinely nonlocal sets without entanglement in multipartite systems}

\author{Ying-Ying Lu}
\affiliation{School of Mathematical Sciences, Hebei Normal University, Shijiazhuang 050024, China}
\author{Hai-Qing Cao}
\affiliation{School of Mathematical Sciences, Hebei Normal University, Shijiazhuang 050024, China}
\author{Hui-Juan Zuo}
\email{huijuanzuo@163.com}
\affiliation{School of Mathematical Sciences, Hebei Normal University, Shijiazhuang 050024, China}
\affiliation{Hebei Key Laboratory of Computational Mathematics and Applications, Shijiazhuang, 050024, China}
\affiliation{Hebei International Joint Research Center for Mathematics and Interdisciplinary Science, Shijiazhuang 050024, China}
\author{Shao-Ming Fei}
\affiliation{School of Mathematical Sciences, Capital Normal University, Beijing 100048, China}


\begin{abstract}
A set of multipartite orthogonal states is genuinely nonlocal if it is locally indistinguishable in every bipartition of the subsystems. If the set is locally reducible, we say it has genuine nonlocality of type \uppercase\expandafter{\romannumeral 1}. Otherwise, we say it has genuine nonlocality of type \uppercase\expandafter{\romannumeral 2}. Due to the complexity of the problem, the construction of genuinely nonlocal sets in general multipartite systems has not been completely solved so far. In this paper, we first provide a nonlocal set of product states in bipartite systems. We obtain a genuinely nonlocal set of type~\uppercase\expandafter{\romannumeral 1} without entanglement in general $n$-partite systems $\otimes^{n}_{i=1}\mathbb{C}^{d_{i}}$ $[3\leq (d_{1}-1)\leq d_{2}\leq \cdots\leq d_{n},n\geq3]$. Then we present two constructions with genuine nonlocality of type~\uppercase\expandafter{\romannumeral 2} in $\mathbb{C}^{d_{1}}\otimes\mathbb{C}^{d_{2}}\otimes\mathbb{C}^{d_{3}}$ $(3\leq d_{1}\leq d_{2}\leq d_{3})$ and $\otimes^{n}_{i=1}\mathbb{C}^{d_{i}}$ $(3\leq d_{1}\leq d_{2}\leq \cdots\leq d_{n},n\geq4)$. Our results further positively answer the open problem that there does exist a genuinely nonlocal set of type~\uppercase\expandafter{\romannumeral2} in multipartite systems [M. S. Li, Y. L. Wang, F. Shi, and M. H. Yung, J. Phys. A: Math. Theor. 54, 445301 (2021)] and highlight its related applications in quantum information processing.

\end{abstract}

\pacs{Valid PACS appear here}
\maketitle


\section{\label{sec:level1}Introduction\protect}

 Quantum nonlocality is one of the most fascinating phenomena in quantum mechanics. The nonlocality from the perspective of state discrimination exhibits the property of being an orthogonal state set. This kind of nonlocality is not the same as the Bell nonlocality \cite{Bell nonlocality}. A set of orthogonal quantum states is {\it locally indistinguishable or nonlocal} if it is not possible to perfectly distinguish the states by any sequence of local operations and classical communications (LOCC). In 1999, Bennett \textit{et al}. \cite{r1} first constructed a locally indistinguishable orthogonal product basis in $\mathbb{C}^{3}\otimes\mathbb{C}^{3}$, which displays the phenomenon of quantum nonlocality without entanglement. Once the information is encoded in a nonlocal set, it can be accessible only by a global measurement. This feature has significant applications in quantum data hiding and quantum secret sharing \cite{app1,app2,app3,app4,app5,app6,app7,app8}. In recent years, great progress has been made in nonlocality theory \cite{r10,r12,r18,r19,r20,r22,r23,r24,r25,r27,r31,r32,Cao stable,Cao distinguish}.

A set of orthogonal quantum states is {\it locally irreducible} if it is not possible to eliminate one or more quantum states from the set by nontrivial orthogonality-preserving local measurements (OPLMs). Halder \textit{et al}. \cite{r4} introduced the concept of local irreducibility and gave a stronger manifestation of nonlocality in multipartite systems, namely, the strong nonlocality, which has been widely investigated \cite{r15,r16,r17,r21,r26,add1,add2,add3,add4,add5,add6}. A set of multipartite orthogonal quantum states is {\it genuinely nonlocal} if it is locally indistinguishable in every bipartition. Rout \textit{et al}. \cite{r13} proposed the concept and the classification of genuinely nonlocal product bases and constructed different types of genuinely nonlocal product bases in $\mathbb{C}^{3}\otimes\mathbb{C}^{3}\otimes\mathbb{C}^{3}$ and $\mathbb{C}^{4}\otimes\mathbb{C}^{4}\otimes\mathbb{C}^{4}$. In particular, when all the parties are separated, if a genuinely nonlocal orthogonal state set is locally reducible, we say it has genuine nonlocality of type \uppercase\expandafter{\romannumeral 1}; otherwise, we say it has genuine nonlocality of type \uppercase\expandafter{\romannumeral 2}. The strength of genuine nonlocality mentioned here is stronger than the indistinguishability-based nonlocality but weaker than the irreducibility-based strong nonlocality. In 2021, Li \textit{et al}. \cite{r9} presented the genuinely nonlocal orthogonal product sets of size $2x+4y+2z-8$ in $\mathbb{C}^{x}\otimes\mathbb{C}^{y}\otimes\mathbb{C}^{z}$ $(x,z\geq3,y\geq4)$. They also gave the general construction of genuinely nonlocal product states in \textit{N}-partite systems for the first time. In Ref. \cite{r14}, Rout \textit{et al}. constructed a set of genuinely nonlocal orthogonal product states in $\mathbb{C}^{4}\otimes\mathbb{C}^{3}\otimes\mathbb{C}^{3}$ and extended  it to  $m+1$-partite systems of $\mathbb{C}^{m+2}\otimes (\mathbb{C}^{3})^{\otimes m}$. In addition, they also gave the construction of a genuinely nonlocal orthogonal product set in $\mathbb{C}^{6}\otimes\mathbb{C}^{5}\otimes\mathbb{C}^{5}$. Later, in Refs. \cite{Xiong2023,Xiong202402,Xiong202403}, the authors constructed sets of multipartite entangled states which were genuinely nonlocal.

In this paper, we first provide the locally indistinguishable set of orthogonal product states with size $2d_{2}-1$ in $\mathbb{C}^{d_{1}}\otimes\mathbb{C}^{d_{2}}$ $(3\leq d_{1}\leq d_{2})$. It is worth mentioning that our structure greatly optimizes the known results. Based on the nonlocal set, we present an orthogonal product state set with genuine nonlocality of type \uppercase\expandafter{\romannumeral 1} in $\mathbb{C}^{d_{1}}\otimes\mathbb{C}^{d_{2}}\otimes\mathbb{C}^{d_{3}}$ $[3\leq (d_{1}-1)\leq d_{2}\leq d_{3}]$ and generalize the construction to \textit{n}-partite systems of $\otimes^{n}_{i=1}\mathbb{C}^{d_{i}}$ $[3\leq (d_{1}-1)\leq d_{2}\leq \cdots\leq d_{n},n\geq3]$. Then we construct type-\uppercase\expandafter{\romannumeral 2} genuinely nonlocal sets without entanglement in $\mathbb{C}^{d_{1}}\otimes\mathbb{C}^{d_{2}}\otimes\mathbb{C}^{d{_3}}$ $(3\leq d_{1}\leq d_{2}\leq d_{3})$ and $\otimes^{n}_{i=1}\mathbb{C}^{d_{i}}$ $(3\leq d_{1}\leq d_{2}\leq\cdots\leq d_{n},n\geq4)$, respectively. Finally, we draw our conclusions and put forward some questions for further consideration.

\theoremstyle{remark}
\newtheorem{definition}{\indent Definition}
\newtheorem{lemma}{\indent Lemma}
\newtheorem{theorem}{\indent Theorem}
\newtheorem{proposition}{\indent Proposition}
\newtheorem{corollary}{\indent Corollary}
\newtheorem{example}{\indent Example}
\def\QEDclosed{\mbox{\rule[0pt]{1.3ex}{1.3ex}}}|
\def\QED{\QEDclosed}
\def\proof{\noindent{\indent\em Proof}.}
\def\endproof{\hspace*{\fill}~\QED\par\endtrivlist\unskip}

\section{Preliminaries}

Throughout this paper, we only consider pure states and we do not normalize states for simplicity. For any positive integer $d\geq2$, $\mathbb{Z}_{d}=$ \{$0,1,\cdots,d-1$\}. Suppose that $\mathcal{H}$ is a \textit{d}-dimensional Hilbert space and \{$|0\rangle,|1\rangle,\cdots,|d-1\rangle\}$ is the computational basis of $\mathcal{H}$. For convenience, we denote $|i_{1} \pm i_{2} \pm \cdots \pm i_{n}\rangle$ the state $\frac{1}{\sqrt{n}}(|i_{1}\rangle \pm |i_{2}\rangle \pm \cdots \pm |i_{n}\rangle)$. For each integer $d\geq2$, $\omega_{d}=e^{\frac{2\pi\sqrt{-1}}{d}}$, i.e., a primitive $d$th root of unit. A positive operator-valued measurement (POVM) on Hilbert space  $\mathcal{H}$ is a set of semidefinite operators \{$E_{m}=M_{m}^{\dagger}M_{m}$\} such that $\sum_{m}E_{m}=\mathbb{I}_\mathcal{H}$, where each $E_{m}$ is called a POVM element, and $\mathbb{I}_\mathcal{H}$ is the identity operator on $\mathcal{H}$.

\begin{definition} Locally indistinguishable \cite{r1}. A set of orthogonal states is locally indistinguishable or nonlocal if it cannot be perfectly distinguished under local operations and classical communications (LOCC).
\end{definition}

\begin{definition} Orthogonality-preserving local measurement. A measurement of a mutually orthogonal quantum state set is called an orthogonality-preserving local measurement (OPLM), if the postmeasurement states remain mutually orthogonal. Furthermore, such a measurement is called trivial if its POVM elements are all proportional to the identity operator; otherwise, it is nontrivial.
\end{definition}

\begin{definition} Locally irreducible \cite{r4}. A set of orthogonal quantum states is locally irreducible if it is not possible to eliminate one or more quantum states from the set by nontrivial OPLMs.
\end{definition}

It is not difficult to show that, while a locally irreducible set is locally indistinguishable, the opposite is not true in general. To prove local indistinguishability (or local irreducibility), it is sufficient to verify that ``orthogonality-preserving local measurements are only trivial measurements" \cite{r19}.

\begin{definition} Genuinely nonlocal \cite{r13}. A set of multipartite orthogonal states is genuinely nonlocal if it is locally indistinguishable in every bipartition of the subsystems.
\end{definition}

For a genuinely nonlocal set, if the set is locally reducible, we say it has genuine nonlocality of type \uppercase\expandafter{\romannumeral 1}. Otherwise, we say it has genuine nonlocality of type \uppercase\expandafter{\romannumeral 2}.

\section{Orthogonal product sets with genuine nonlocality of type \expandafter{\romannumeral 1}}

In this section, we construct type-\uppercase\expandafter{\romannumeral 1} genuinely nonlocal sets without entanglement, which are locally reducible.
In the following we denote \textit{A}, \textit{B}, and \textit{C} as the three subsystems of a tripartite system associated with Alice, Bob, and Charlie, respectively, and we denote $A_{1},A_{2},\cdots,A_{n}$ as the \textit{n} subsystems of an \textit{n}-partite system associated with, without confusion, $A_{1},A_{2},\cdots,A_{n}$.

\subsection{Nonlocal sets in $\mathbb{C}^{d_{1}}\otimes \mathbb{C}^{d_{2}}$}

In Ref. \cite{r25}, Yu and Oh constructed the nonlocal set of orthogonal product states in $\mathbb{C}^{d}\otimes\mathbb{C}^{d}$. Inspired by the construction in Ref. \cite{r25}, we first construct a set of orthogonal product states with size of $2d_{2}-1$ in $\mathbb{C}^{d_{1}}\otimes\mathbb{C}^{d_{2}}$ $(3\leq d_{1}\leq d_{2})$ that is indistinguishable under LOCC.

\begin{proposition}\label{eg:35}
In $\mathbb{C}^{3}\otimes\mathbb{C}^{5}$, the set $\mathcal{S}_{1}$ consisting of the following nine orthogonal product states cannot be perfectly distinguished under LOCC:
\begin{equation}
\begin{aligned}
 |\phi_{1}\rangle=&|1\rangle_{A}|0-1\rangle_{B},\\
 |\phi_{2}\rangle=&|2\rangle_{A}|0-2\rangle_{B},\\
 |\phi_{3}\rangle=&|0-1\rangle_{A}|2\rangle_{B},\\
 |\phi_{4}\rangle=&|0-2\rangle_{A}|1\rangle_{B},\\
 |\phi_{5}\rangle=&|0-1\rangle_{A}|3\rangle_{B},\\
 |\phi_{6}\rangle=&|0-1\rangle_{A}|4\rangle_{B},\\
 |\phi_{7}\rangle=&|0+1\rangle_{A}(|2\rangle_{B}+\omega_{3}|3\rangle_{B}+\omega_{3}^{2}|4\rangle_{B}),\\
 |\phi_{8}\rangle=&|0+1\rangle_{A}(|2\rangle_{B}+\omega_{3}^{2}|3\rangle_{B}+\omega_{3}|4\rangle_{B}),\\
 |\phi_{9}\rangle=&|0+1+2\rangle_{A}|0+1+2+3+4\rangle_{B}.
\end{aligned}
\end{equation}
\end{proposition}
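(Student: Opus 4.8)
The plan is to show that any orthogonality-preserving local measurement (OPLM) on this set is necessarily trivial, which, by the criterion recalled after Definition~3, immediately yields local indistinguishability. Since a general LOCC protocol begins with one party performing a nontrivial measurement, it suffices to rule this out for both Alice and Bob. The standard technique applies: fix a party, say Bob, and let $\{E_m = M_m^\dagger M_m\}$ be an OPLM on his subsystem $\mathbb{C}^5$. Writing $E = M_m^\dagger M_m = (a_{ij})$ in the computational basis $\{|0\rangle_B,\dots,|4\rangle_B\}$, the requirement that post-measurement states remain mutually orthogonal forces $\langle\phi_i|_B\, E\, |\phi_j\rangle_B = 0$ for every pair $(i,j)$ of states whose Alice-parts are not orthogonal. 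I would tabulate which pairs $|\phi_i\rangle,|\phi_j\rangle$ have overlapping $A$-parts — e.g.\ $\phi_1,\phi_3,\phi_4,\phi_5,\phi_6$ all involve $|0\rangle_A$ or $|1\rangle_A$, the pair $\phi_7,\phi_8$ shares $|0+1\rangle_A$, and $\phi_9$ overlaps with essentially everything — and extract the resulting linear constraints on the entries $a_{ij}$.

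The key steps, in order, are: (1) From the states supported on a single $B$-level tensored with various $A$-vectors (e.g.\ $|2\rangle_B$ appears in $\phi_3$, $|1\rangle_B$ in $\phi_4$, $|3\rangle_B$ in $\phi_5$, $|4\rangle_B$ in $\phi_6$, and $|0\rangle_B,|1\rangle_B$ in $\phi_1,\phi_2$), deduce that the off-diagonal entries $a_{ij}$ connecting these levels must vanish — this is where the ``enough states to lock every coherence'' feature of the construction gets used. (2) Use the states $\phi_7,\phi_8$ together with $\phi_5,\phi_6,\phi_3$ (all having $A$-parts non-orthogonal to $|0+1\rangle_A$) to kill the remaining off-diagonal entries among $\{|2\rangle_B,|3\rangle_B,|4\rangle_B\}$ and between that block and $\{|0\rangle_B,|1\rangle_B\}$; note that $\phi_7$ and $\phi_8$ themselves being orthogonal under $E$ will, combined with the Fourier structure $1,\omega_3,\omega_3^2$, force $a_{22}=a_{33}=a_{44}$. (3) Use $\phi_1,\phi_2$ (with $A$-parts $|1\rangle_A,|2\rangle_A$) and $\phi_9$ to pin down $a_{00}=a_{11}$ and tie all diagonal entries together, concluding $E \propto \mathbb{I}_{\mathbb{C}^5}$. (4) Run the symmetric argument on Alice's side: an OPLM $F=(b_{kl})$ on $\mathbb{C}^3$; here $\phi_3,\phi_4,\phi_5,\phi_6$ give $|0-1\rangle_A$ and $|0-2\rangle_A$ (so $\langle 0-1|F|0-1\rangle$, etc.), $\phi_1,\phi_2$ give $|1\rangle_A,|2\rangle_A$, $\phi_7,\phi_8$ give $|0+1\rangle_A$, and $\phi_9$ gives $|0+1+2\rangle_A$ — using the $B$-part orthogonality/non-orthogonality pattern (most $B$-parts are mutually orthogonal, but $\phi_9$'s $B$-part overlaps all, and $\phi_3,\phi_5,\phi_6$ share no $B$-overlap while $\phi_7,\phi_8$ are non-orthogonal) one again collapses $F$ to a multiple of $\mathbb{I}_{\mathbb{C}^3}$.

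The main obstacle will be step~(2)–(3) on Bob's side: the subspace spanned by $|2\rangle_B,|3\rangle_B,|4\rangle_B$ is probed only through the two Fourier vectors in $\phi_7,\phi_8$ and the single vectors $|2\rangle_B$ (from $\phi_3$), $|3\rangle_B$ ($\phi_5$), $|4\rangle_B$ ($\phi_6$) — I need to check carefully that the $A$-part overlap pattern actually delivers \emph{all} the cross-constraints $\langle\phi_7|E|\phi_5\rangle=\langle\phi_8|E|\phi_5\rangle=\langle\phi_7|E|\phi_6\rangle=\langle\phi_8|E|\phi_6\rangle=\langle\phi_7|E|\phi_3\rangle=\langle\phi_8|E|\phi_3\rangle=0$ and $\langle\phi_7|E|\phi_8\rangle=0$, and that these seven equations plus the $|0\rangle_B,|1\rangle_B$ block force $a_{22}=a_{33}=a_{44}$ and all relevant off-diagonals to zero without leaving a free parameter. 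The bookkeeping is routine but the construction is tight, so this is where an error would hide; once that block is shown rigid, gluing it to the $\{|0\rangle_B,|1\rangle_B\}$ block via $\phi_9$ and $\phi_3$ is straightforward.
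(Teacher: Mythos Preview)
Your overall strategy---show that any OPLM on either side must be trivial by extracting linear constraints from state pairs with non-orthogonal parts on the \emph{other} side---is exactly what the paper does, and the argument goes through. However, you have a concrete factual error in step~(2) that undermines what you flag as the ``main obstacle'': the $A$-parts $|0-1\rangle_A$ (carried by $\phi_3,\phi_5,\phi_6$) and $|0+1\rangle_A$ (carried by $\phi_7,\phi_8$) are \emph{orthogonal}, not non-orthogonal as you assert. Consequently the six cross-constraints $\langle\phi_7|I\otimes E|\phi_5\rangle=\dots=\langle\phi_8|I\otimes E|\phi_3\rangle=0$ that you single out are all vacuous---they hold for every $E$ and carry no information.

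This does not wreck the proof, because the off-diagonals $b_{23},b_{24},b_{34}$ already vanish from the much simpler pairs in your step~(1): $\phi_3,\phi_5,\phi_6$ all share the \emph{same} $A$-vector $|0-1\rangle_A$, so the three pairwise constraints among them directly give $\langle 2|E|3\rangle=\langle 2|E|4\rangle=\langle 3|E|4\rangle=0$. That is precisely how the paper obtains them. Once the off-diagonals are gone, your use of $\langle\phi_7|I\otimes E|\phi_8\rangle=0$ (same $A$-part, so a genuine constraint) yields $b_{22}+\omega_3 b_{33}+\omega_3^2 b_{44}=0$ with real diagonals, hence $b_{22}=b_{33}=b_{44}$; the paper instead pairs $\phi_9$ with $\phi_7$ and with $\phi_8$ to obtain this relation and its conjugate, which is equivalent. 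The remaining diagonal equalities $b_{00}=b_{11}=b_{22}$ come from $(\phi_1,\phi_9)$ and $(\phi_2,\phi_9)$, as you indicate in step~(3). So: fix the orthogonality error, drop the six vacuous constraints, and recognise that the $\{2,3,4\}$-block is already locked by step~(1); then your outline matches the paper's proof.
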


The proof is given in Appendix \ref{app:1}. Based on the above proposition, we give a general construction of a locally indistinguishable set in $\mathbb{C}^{d_{1}}\otimes\mathbb{C}^{d_{2}}$.

\begin{lemma}\label{th:d1d2}
In $\mathbb{C}^{d_{1}}\otimes\mathbb{C}^{d_{2}}$ $(3\leq d_{1}\leq d_{2})$, the set $\mathcal{S}_{2}$ consisting of the following $2d_{2}-1$ orthogonal product states cannot be perfectly distinguished under LOCC:
\begin{equation}\label{eq:d1d2}
\begin{aligned}
|\phi_{i}\rangle=&|i\rangle_{A}|0-i\rangle_{B},~ 1\leq i\leq d_{1}-1,\\
|\phi_{i+(d_{1}-1)}\rangle=&|0-i\rangle_{A}|m\rangle_{B},~ 1\leq i\leq d_{1}-2,~ m=i+1;\\&i=d_{1}-1,~m=1,\\
|\phi_{j+(d_{1}-1)}\rangle=&|0-1\rangle_{A}|j\rangle_{B},~ d_{1}\leq j\leq d_{2}-1,\\
|\phi_{s+(d_{1}+d_{2}-2)}\rangle=&|0+1\rangle_{A} (|2\rangle_{B}\\
&+\sum\limits_{t=1}^{d_{2}-d_{1}}\omega_{d_{2}-d_{1}+1}^{st} |t+d_{1}-1\rangle_{B}), \\&1\leq s\leq d_{2}-d_{1},\\
|\phi_{2d_{2}-1}\rangle=&|0+1+\cdots+(d_{1}-1)\rangle_{A} \\&|0+1+\cdots+(d_{2}-1)\rangle_{B}.\\
\end{aligned}
\end{equation}
\end{lemma}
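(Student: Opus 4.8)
The plan is to establish local indistinguishability through the criterion recalled above: it suffices to show that any orthogonality-preserving local measurement (OPLM) performed by Alice or by Bob must be trivial, i.e.\ proportional to the identity. Indeed, in any LOCC protocol the first measurement (by whichever party acts first) must be orthogonality preserving if the protocol is to succeed perfectly; if it is necessarily trivial, the post-measurement ensemble is just a rescaling of $\mathcal{S}_2$, and iterating this observation shows that no information is ever extracted, so $\mathcal{S}_2$ is locally indistinguishable. (This argument in fact yields the stronger local irreducibility, but only indistinguishability is claimed here.) One first checks that the $2d_2-1$ states are mutually orthogonal product states, a direct computation using $\sum_{t=0}^{d_2-d_1}\omega_{d_2-d_1+1}^{st}=0$ for $s\not\equiv0$.

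For Alice, write $E=M^\dagger M=\sum_{i,j\in\mathbb{Z}_{d_1}}a_{ij}|i\rangle\langle j|$ in the computational basis; $E$ is positive, hence Hermitian. For two states $|\alpha\rangle_A|\beta\rangle_B,\ |\gamma\rangle_A|\delta\rangle_B\in\mathcal{S}_2$, orthogonality preservation forces $\langle\alpha|E|\gamma\rangle=0$ whenever $\langle\beta|\delta\rangle\ne0$. Applied to the pairs $\{|\phi_i\rangle,|\phi_j\rangle\}$ with $1\le i<j\le d_1-1$ (whose $B$-parts $|0-i\rangle,|0-j\rangle$ have overlap $1$), this gives $a_{ij}=0$ for distinct $i,j\in\{1,\dots,d_1-1\}$; pairing each $|\phi_i\rangle$ with the state in the second family whose $B$-part equals $|i\rangle$ eliminates the entries $a_{i,0}$, and Hermiticity eliminates $a_{0,i}$, so $E$ is diagonal. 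Finally, pairing $|\phi_{2d_2-1}\rangle$ (with $A$-part $|0+1+\cdots+(d_1-1)\rangle$) against each second-family state $|0-i\rangle_A|m\rangle_B$, where the $B$-overlap equals $1$, forces $a_{00}-a_{ii}=0$ for every $i$, hence $E\propto\mathbb{I}$.

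For Bob, write $E=\sum_{i,j\in\mathbb{Z}_{d_2}}b_{ij}|i\rangle\langle j|$; now the relevant condition is $\langle\beta|E|\delta\rangle=0$ whenever the $A$-parts overlap. The second and third families have pairwise non-orthogonal $A$-parts and single-basis-vector $B$-parts ranging over $|1\rangle,\dots,|d_2-1\rangle$, which kills all $b_{ij}$ with distinct $i,j\in\{1,\dots,d_2-1\}$; pairing $|\phi_i\rangle$ (resp.\ $|\phi_1\rangle$) with the appropriate second- (resp.\ third-) family state then kills $b_{0,j}$ for all $j\ge1$, so $E$ is diagonal. Pairing $|\phi_{2d_2-1}\rangle$ with each $|\phi_i\rangle$, $1\le i\le d_1-1$, yields $b_{00}=b_{ii}$; pairing $|\phi_{2d_2-1}\rangle$ with each plateau state $|\phi_{s+(d_1+d_2-2)}\rangle$ yields $b_{22}+\sum_{t=1}^{d_2-d_1}\omega_{d_2-d_1+1}^{st}\,b_{t+d_1-1,t+d_1-1}=0$ for $s=1,\dots,d_2-d_1$, and since the matrix $\bigl(\omega_{d_2-d_1+1}^{st}\bigr)_{s,t=0}^{d_2-d_1}$ is the invertible discrete Fourier matrix, these equations force $b_{22}=b_{d_1,d_1}=\cdots=b_{d_2-1,d_2-1}$. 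Together with the earlier relations, all diagonal entries of $E$ coincide, so $E\propto\mathbb{I}$ and Bob's measurement is trivial as well; this completes the argument.

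The bulk of the work is the bookkeeping of which pairs of states have overlapping $A$-parts or $B$-parts, which is routine. The one genuinely structural step, and where I expect the main obstacle, is the last step for Bob: the roots of unity $\omega_{d_2-d_1+1}^{st}$ in the plateau states are present precisely so that invertibility of the Vandermonde/DFT matrix can be invoked to conclude that the whole upper block of Bob's diagonal is constant, and making this linear-algebra step clean is the crux. One should also treat the boundary regimes separately — $d_1=d_2$ (the plateau family, i.e.\ the states $|\phi_{s+(d_1+d_2-2)}\rangle$, is empty) and $d_2=d_1+1$ (a single plateau state) — and verify that the hypothesis $d_1\ge3$ is used exactly where $|2\rangle$ must be a $B$-basis label distinct from $|0\rangle$ and $|1\rangle$.
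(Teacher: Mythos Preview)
Your proposal is correct and follows essentially the same approach as the paper's own proof in Appendix~\ref{app:2}: both arguments show that any OPLM by Alice or Bob is trivial by using the same pairings of states to kill off-diagonal entries and the same Vandermonde/DFT argument on the plateau states to equate the remaining diagonal block of Bob's POVM element. Your explicit mention of the boundary cases $d_1=d_2$ and $d_2=d_1+1$ and of the role of $d_1\ge3$ is a nice bit of extra care not spelled out in the paper.
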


The proof of the lemma is given in Appendix \ref{app:2}.

\subsection{Genuinely nonlocal sets in $\mathbb{C}^{d_{1}}\otimes\mathbb{C}^{d_{2}}\otimes\mathbb{C}^{d_{3}}$ $[3\leq (d_{1}-1)\leq d_{2}\leq d_{3}]$}

We now construct a genuinely nonlocal set of type \uppercase\expandafter{\romannumeral 1} in $\mathbb{C}^{d_{1}}\otimes\mathbb{C}^{d_{2}}\otimes\mathbb{C}^{d_{3}}$ $[3\leq (d_{1}-1)\leq d_{2}\leq d_{3}]$ based on Lemma $\ref{th:d1d2}$. We first give a proposition.

\begin{proposition}
The following $18$ orthogonal product states in $\mathbb{C}^4\otimes \mathbb{C}^4\otimes \mathbb{C}^6$ are genuinely nonlocal:
\begin{equation}
\begin{aligned}
 |\phi_{1}\rangle=&|1\rangle_{A}|0-1\rangle_{B}|0\rangle_{C},\\
 |\phi_{2}\rangle=&|2\rangle_{A}|0-2\rangle_{B}|0\rangle_{C},\\
 |\phi_{3}\rangle=&|0-1\rangle_{A}|2\rangle_{B}|0\rangle_{C},\\
 |\phi_{4}\rangle=&|0-2\rangle_{A}|1\rangle_{B}|0\rangle_{C},\\
 |\phi_{5}\rangle=&|0-1\rangle_{A}|3\rangle_{B}|0\rangle_{C},\\
 |\phi_{6}\rangle=&|0+1\rangle_{A}|2-3\rangle_{B}|0\rangle_{C},\\
 |\phi_{7}\rangle=&|0+1+2\rangle_{A}|0+1+2+3\rangle_{B}|0\rangle_{C},\\
 |\phi_{8}\rangle=&|3\rangle_{A}|1\rangle_{B}|0-1\rangle_{C},\\
 |\phi_{9}\rangle=&|3\rangle_{A}|2\rangle_{B}|0-2\rangle_{C},\\
 |\phi_{10}\rangle=&|3\rangle_{A}|3\rangle_{B}|0-3\rangle_{C},\\
 |\phi_{11}\rangle=&|3\rangle_{A}|0-1\rangle_{B}|2\rangle_{C},\\
 |\phi_{12}\rangle=&|3\rangle_{A}|0-2\rangle_{B}|3\rangle_{C},\\
 |\phi_{13}\rangle=&|3\rangle_{A}|0-3\rangle_{B}|1\rangle_{C},\\
 |\phi_{14}\rangle=&|3\rangle_{A}|0-1\rangle_{B}|4\rangle_{C},\\
 |\phi_{15}\rangle=&|3\rangle_{A}|0-1\rangle_{B}|5\rangle_{C},\\
 |\phi_{16}\rangle=&|3\rangle_{A}|0+1\rangle_{B}(|2\rangle_{C}+\omega_{3}|4\rangle_{C} +\omega_{3}^{2}|5\rangle_{C}),\\
 |\phi_{17}\rangle=&|3\rangle_{A}|0+1\rangle_{B}(|2\rangle_{C}+\omega_{3}^{2}|4\rangle_{C} +\omega_{3}|5\rangle_{C}),\\
 |\phi_{18}\rangle=&|3\rangle_{A}|0+1+2+3\rangle_{B}|0+1+\cdots+5\rangle_{C}.\\
 \end{aligned}
 \end{equation}
\end{proposition}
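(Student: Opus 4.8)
The plan is to establish genuine nonlocality by showing that in each of the three bipartitions—$A|BC$, $B|AC$, and $C|AB$—the only orthogonality-preserving local measurement performed by the single party on that side of the cut is trivial (proportional to the identity). By the standard criterion quoted after Definition~3, this suffices for local indistinguishability in every bipartition, hence for genuine nonlocality. So I would run three separate arguments, one per cut, each reducing the nonlocality of the tripartite set to (a close variant of) the bipartite nonlocal set from Lemma~\ref{th:d1d2}.

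First consider the cut $C|AB$. Here Charlie holds $\mathbb{C}^6$ and Alice--Bob jointly hold $\mathbb{C}^4\otimes\mathbb{C}^4$, which we may regard as one side of dimension $16$. Restricting attention to the states $|\phi_8\rangle,\dots,|\phi_{18}\rangle$ (all of which carry the fixed flag $|3\rangle_A$), one sees that the $B$--$C$ parts of these eleven states form, up to relabeling, exactly the $2d_2-1$ states of $\mathcal{S}_2$ in $\mathbb{C}^4\otimes\mathbb{C}^6$ from Lemma~\ref{th:d1d2} (with $d_1=4$, $d_2=6$). I would write Charlie's OPLM as $E=M^\dagger M$ in the basis $\{|0\rangle_C,\dots,|5\rangle_C\}$ and impose the orthogonality-preserving conditions $\langle\phi_i|(\mathbb{I}_{AB}\otimes E)|\phi_j\rangle=0$ for all non-orthogonal-on-$AB$ pairs; the same linear-algebra bookkeeping that proves Lemma~\ref{th:d1d2} forces $E\propto\mathbb{I}_6$. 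The role of the remaining states $|\phi_1\rangle,\dots,|\phi_7\rangle$ (which have $|0\rangle_C$) is only to supply additional constraints, which are automatically consistent.

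The symmetric-looking cut $B|AC$ is handled the same way: Bob holds $\mathbb{C}^4$, and the states are designed so that, conditioned on the $A$--$C$ flags, Bob's reduced states reproduce a nonlocal configuration in $\mathbb{C}^4$ (the $|0-i\rangle_B$, $|0+1\rangle_B$, $|0+1+2+3\rangle_B$ structure appearing in $|\phi_1\rangle$–$|\phi_7\rangle$ and again, tensored against the $\mathbb{C}^6$ data, in $|\phi_8\rangle$–$|\phi_{18}\rangle$); writing out Bob's POVM element in the computational basis and using the orthogonality relations between, say, $|\phi_1\rangle,\dots,|\phi_7\rangle$ pins down all off-diagonal entries and then equalizes the diagonal, giving triviality. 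The cut $A|BC$ is the easiest: Alice's OPLM $E_A$ on $\mathbb{C}^4$ must preserve orthogonality among states whose $A$-parts include $|1\rangle,|2\rangle,|3\rangle,|0-1\rangle,|0-2\rangle,|0+1\rangle,|0+1+2\rangle$; the mutual non-orthogonality pattern of these vectors is rich enough (in fact it already appears within Proposition~\ref{eg:35}) to force $\langle i|E_A|j\rangle=0$ for $i\neq j$ and all diagonal entries equal, so $E_A\propto\mathbb{I}_4$.

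The main obstacle I anticipate is the bookkeeping in the $C|AB$ cut, because Charlie's space $\mathbb{C}^6$ is the largest and the relevant eleven states mix ``difference'' vectors $|0-i\rangle_C$, the two $\omega_3$-phased vectors $|\phi_{16}\rangle,|\phi_{17}\rangle$ supported on $\{2,4,5\}$, and the fully spread stopper state $|\phi_{18}\rangle$; one must check that together these force \emph{every} off-diagonal matrix element of $E_C$ to vanish and all six diagonal entries to coincide, with no residual freedom. This is exactly the mechanism already verified in the proof of Lemma~\ref{th:d1d2}, so the clean way to present it is to exhibit the relabeling of $|\phi_8\rangle,\dots,|\phi_{18}\rangle$ onto $\mathcal{S}_2\subset\mathbb{C}^4\otimes\mathbb{C}^6$ explicitly and then invoke that lemma's proof verbatim, rather than redoing the elimination by hand. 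A secondary (routine) point is simply to check that all $18$ listed states are pairwise orthogonal, which follows from inspecting the $A$, $B$, $C$ components pair by pair.
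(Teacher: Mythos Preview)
Your strategy has a genuine gap in the $A|BC$ cut. You claim that Alice's OPLM element $E_A$ on $\mathbb{C}^4$ must satisfy $E_A\propto\mathbb{I}_4$, but this is false: the only states whose $A$-part is $|3\rangle_A$ are $|\phi_8\rangle,\dots,|\phi_{18}\rangle$, and their $BC$-parts are mutually orthogonal, so the orthogonality-preservation conditions never tie the diagonal entry $\langle 3|E_A|3\rangle$ to $\langle 0|E_A|0\rangle=\langle 1|E_A|1\rangle=\langle 2|E_A|2\rangle$. Concretely, Alice can perform the nontrivial OPLM $\{M_1^A=|3\rangle_A\langle 3|,\ M_2^A=I_A-M_1^A\}$, which cleanly splits the eighteen states into $\{|\phi_1\rangle,\dots,|\phi_7\rangle\}$ and $\{|\phi_8\rangle,\dots,|\phi_{18}\rangle\}$. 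This is exactly the measurement the paper exhibits in the proof of Theorem~\ref{The:1} to show the construction is \emph{locally reducible}, i.e.\ genuinely nonlocal of type~\uppercase\expandafter{\romannumeral1}; your claim, if true, would make the set locally irreducible and contradict that classification.

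More broadly, showing only that the \emph{single}-party side of a bipartition has trivial OPLM is not by itself sufficient for local indistinguishability across that cut: the joint side could still go first and make progress. The paper runs a subset argument instead. For the cuts $A|BC$ and $B|AC$ it observes that $\{|\phi_1\rangle,\dots,|\phi_7\rangle\}=\mathcal{A}_1\otimes|0\rangle_C$ with $\mathcal{A}_1\subset\mathbb{C}^3\otimes\mathbb{C}^4$ locally indistinguishable by Lemma~\ref{th:d1d2}; since the fixed flag $|0\rangle_C$ helps neither side, this subset---hence the full set---is indistinguishable in both cuts. For $C|AB$ (and again $B|CA$) it uses $\{|\phi_8\rangle,\dots,|\phi_{18}\rangle\}=|3\rangle_A\otimes\mathcal{B}_1$ with $\mathcal{B}_1\subset\mathbb{C}^4\otimes\mathbb{C}^6$ locally indistinguishable. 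Your $C|AB$ paragraph is already essentially this; the fix for $A|BC$ is to drop the incorrect $E_A\propto\mathbb{I}_4$ claim and invoke the indistinguishability of the subset $\{|\phi_1\rangle,\dots,|\phi_7\rangle\}$ directly.
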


\begin{proof} By Lemma \ref{th:d1d2}, $\mathcal{A}_{1}=\{ |1\rangle_{A}|0-1\rangle_{B}, |2\rangle_{A}|0-2\rangle_{B}, |0-1\rangle_{A}|2\rangle_{B}, |0-2\rangle_{A}|1\rangle_{B}, |0-1\rangle_{A}|3\rangle_{B}, |0+1\rangle_{A}|2-3\rangle_{B}, |0+1+2\rangle_{A}|0+1+2+3\rangle_{B}\}$ is a locally indistinguishable orthogonal product set of system $AB$ with size $7$ in $\mathbb{C}^3\otimes \mathbb{C}^4$. $\mathcal{B}_{1}=\{ |1\rangle_{B}|0-1\rangle_{C}, |2\rangle_{B}|0-2\rangle_{C}, |3\rangle_{B}|0-3\rangle_{C}, |0-1\rangle_{B}|2\rangle_{C}, |0-2\rangle_{B}|3\rangle_{C}, |0-3\rangle_{B}|1\rangle_{C}, |0-1\rangle_{B}|4\rangle_{C}, |0-1\rangle_{B}|5\rangle_{C},
|0+1\rangle_{B}(|2\rangle_{C}+\omega_{3}|4\rangle_{C} +\omega_{3}^{2}|5\rangle_{C}), |0+1\rangle_{B}(|2\rangle_{C}+\omega_{3}^{2}|4\rangle_{C} +\omega_{3}|5\rangle_{C} ),
|0+1+2+3\rangle_{B}|0+1+\cdots+5\rangle_{C}\}$ is a locally indistinguishable orthogonal product set of system $BC$ with size $11$ in $\mathbb{C}^4\otimes \mathbb{C}^6$.

Note that the tensors of the vectors in $\mathcal{A}_{1}$ and $|0\rangle_{C}$ are just the states $\{ |\phi_{i}\rangle\} _{i=1}^{7}$. Hence, both Alice and Bob can only perform trivial OPLMs, and then $AC$ and $BC$ cannot locally distinguish these states, which leads to local indistinguishability in partitions $B|AC$ and $A|BC$.

Similarly, the tensors of the vectors in $\mathcal{B}_{1}$ and $|3\rangle_{A}$ are just the states $\{ |\phi_{i}\rangle\} _{i=8}^{18}$. As the vectors in the subsystem $A$ are the same, these $11$ states $\{ |\phi_{i}\rangle\} _{i=8}^{18}$ cannot be distinguished in partitions $C|AB$ and $B|CA$ under LOCC. Thus, all the $18$ orthogonal product states cannot be perfectly distinguished in any bipartition. This completes the proof.
\end{proof}

From the above proposition, we obtain the following general construction.

First, we give an orthogonal product set $\mathcal{A}$ of size $2d_{2}-1$ by Eq. (\ref{eq:d1d2}):
\begin{equation}\label{eq:14}
\begin{aligned}
|\phi_{i}\rangle=&|i\rangle_{A}|0-i\rangle_{B}|0\rangle_{C},~1\leq i\leq d_{1}-2,\\
|\phi_{i+(d_{1}-2)}\rangle=&|0-i\rangle_{A}|m\rangle_{B}|0\rangle_{C}, ~1\leq i\leq d_{1}-3, \\&m=i+1;~i=d_{1}-2,~m=1,\\
|\phi_{j+(d_{1}-2)}\rangle=&|0-1\rangle_{A}|j\rangle_{B}|0\rangle_{C},~d_{1}-1\leq j\leq d_{2}-1,\\
|\phi_{s+(d_{1}+d_{2}-3)}\rangle=&|0+1\rangle_{A}|\eta\rangle_{B}|0\rangle_{C},~1\leq s\leq d_{2}-d_{1}+1,\\
|\phi_{2d_{2}-1}\rangle=&|0+1+\cdots+(d_{1}-2)\rangle_{A} \\&|0+1+\cdots+(d_{2}-1)\rangle_{B}|0\rangle_{C}.
\end{aligned}
\end{equation}
where $|\eta\rangle_{B}=|2\rangle_{B}+\sum\limits_{t=1}^{d_{2}-d_{1}+1}\omega_{d_{2}-d_{1}+2}^{st}|t+d_{1}-2\rangle_{B}$.

Then we give an orthogonal product set $\mathcal{B}$ of size $2d_{3}-1$ by Eq. (\ref{eq:d1d2}):
\begin{equation}\label{eq:15}
\begin{aligned}
|\phi_{i+2d_{2}-1}\rangle=&|d_{1}-1\rangle_{A}|i\rangle_{B}|0-i\rangle_{C},~1\leq i\leq d_{2}-1,\\
|\phi_{i+3d_{2}-2}\rangle=&|d_{1}-1\rangle_{A}|0-i\rangle_{B}|m\rangle_{C},~1\leq i\leq d_{2}-2,\\&m=i+1;~i=d_{2}-1,~m=1,\\
|\phi_{j+3d_{2}-2}\rangle=&|d_{1}-1\rangle_{A}|0-1\rangle_{B}|j\rangle_{C},~d_{2}\leq j\leq d_{3}-1,\\
|\phi_{s+3d_{2}+d_{3}-3}\rangle=&|d_{1}-1\rangle_{A}|0+1\rangle_{B}|\eta\rangle_{C},~1\leq s\leq d_{3}-d_{2},\\
|\phi_{2(d_{2}+d_{3})-2}\rangle=&|d_{1}-1\rangle_{A}|0+1+\cdots+(d_{2}-1)\rangle_{B}\\&|0+1+\cdots+(d_{3}-1)\rangle_{C},\\
\end{aligned}
\end{equation}
where $|\eta\rangle_{C}=|2\rangle_{C}+\sum\limits_{t=1}^{d_{3}-d_{2}}\omega_{d_{3}-d_{2}+1}^{st}|t+d_{2}-1\rangle_{C}$.

\begin{theorem}\label{The:1}
In $\mathbb{C}^{d_{1}}\otimes\mathbb{C}^{d_{2}}\otimes\mathbb{C}^{d{_3}}$ $[3\leq (d_{1}-1)\leq d_{2}\leq d_{3}]$, the $2(d_{2}+d_{3})-2$ orthogonal product states given by Eqs. (\ref{eq:14}) and (\ref{eq:15}) are genuinely nonlocal.
\end{theorem}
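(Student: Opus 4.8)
The plan is to mirror the structure of the preceding $\mathbb{C}^4\otimes\mathbb{C}^4\otimes\mathbb{C}^6$ proposition. First I would observe that the set $\mathcal{A}$ of Eq.~(\ref{eq:14}) is, after stripping off the trivial factor $|0\rangle_C$ on Charlie's side, exactly (a relabeled copy of) the set $\mathcal{S}_2$ of Lemma~\ref{th:d1d2} living in $\mathbb{C}^{d_1-1}\otimes\mathbb{C}^{d_2}$ on systems $AB$ — note the index shift $d_1\mapsto d_1-1$, which is why the ``$-1$'' and ``$-2$'' offsets appear — and hence is locally indistinguishable in $\mathbb{C}^{d_1-1}\otimes\mathbb{C}^{d_2}$ by that lemma. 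Symmetrically, the set $\mathcal{B}$ of Eq.~(\ref{eq:15}) is, after stripping the common factor $|d_1-1\rangle_A$, a relabeled copy of $\mathcal{S}_2$ in $\mathbb{C}^{d_2}\otimes\mathbb{C}^{d_3}$ on systems $BC$, again locally indistinguishable by Lemma~\ref{th:d1d2}. These two observations are the structural backbone.

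Next I would argue genuine nonlocality bipartition by bipartition, using the standard trick that a set remains locally indistinguishable if it contains a locally indistinguishable subset on which every party can only perform trivial OPLMs. Consider the cut $B|AC$ (equivalently $A|BC$): restrict attention to the $2d_2-1$ states of $\mathcal{A}$. On these, Charlie holds the fixed product vector $|0\rangle_C$, so any OPLM by Charlie is automatically trivial on this subset; therefore distinguishing $\mathcal{A}$ across $B|AC$ would require distinguishing the $AB$-states, i.e.\ $\mathcal{S}_2$ in $\mathbb{C}^{d_1-1}\otimes\mathbb{C}^{d_2}$, by LOCC between $A$ and $B$ — impossible by Lemma~\ref{th:d1d2}. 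Hence the full set cannot be distinguished across $B|AC$ or $A|BC$. For the cut $C|AB$ (equivalently $B|CA$): restrict to the $2d_3-1$ states of $\mathcal{B}$; Alice holds the fixed vector $|d_1-1\rangle_A$, so her OPLMs are trivial on this subset, and distinguishing $\mathcal{B}$ across $C|AB$ reduces to LOCC-distinguishing $\mathcal{S}_2$ in $\mathbb{C}^{d_2}\otimes\mathbb{C}^{d_3}$ between $B$ and $C$ — impossible. So the set is locally indistinguishable in every one of the three bipartitions, which is exactly genuine nonlocality.

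The main obstacle, and the step requiring the most care, is verifying the first claim of the previous paragraph precisely: that restricting to a fixed factor on one party really does force that party's orthogonality-preserving measurements to be trivial \emph{on the relevant subset}, and — more importantly — that an LOCC protocol distinguishing the \emph{whole} set would, when viewed only through its action on that subset, induce an LOCC protocol distinguishing the subset. The latter is a standard but not entirely trivial reduction: one has to check that discarding the other states does not somehow give extra power, which follows because an LOCC distinguishing protocol for the full set in particular distinguishes the subset, and on the subset the third party's rounds act trivially (proportional to identity on the one-dimensional support), so they can be deleted, leaving a genuine two-party LOCC protocol that contradicts Lemma~\ref{th:d1d2}. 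I would also need to double-check the bookkeeping: that Eqs.~(\ref{eq:14}) and~(\ref{eq:15}) share exactly the correct overlap structure (the state $|\phi_{2d_2-1}\rangle$ from $\mathcal{A}$ versus $|\phi_{2d_2-1+1}\rangle$ from $\mathcal{B}$, etc.), that the total count is $2(d_2+d_3)-2$ as stated (the two blocks of sizes $2d_2-1$ and $2d_3-1$ overlap in nothing, so the sum is $2d_2-1+2d_3-1=2(d_2+d_3)-2$), and that mutual orthogonality across the two blocks holds — which it does because any $|\phi\rangle\in\mathcal{A}$ carries $|0\rangle_C$ while any $|\phi\rangle\in\mathcal{B}$ carries some $|0-i\rangle_C$ or $|\eta\rangle_C$ or $|0+\cdots+(d_3-1)\rangle_C$ orthogonal structure on $A$ ($|j\rangle_A$ with $j\le d_1-2$ versus $|d_1-1\rangle_A$), so orthogonality is immediate from the $A$-part. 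Once these checks are in place the theorem follows by concatenating the two lemma applications exactly as in the $4\otimes4\otimes6$ case.
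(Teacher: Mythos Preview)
Your proposal is correct and follows essentially the same approach as the paper's proof: both identify $\mathcal{A}$ and $\mathcal{B}$ (after stripping the fixed third-party factor) as instances of the Lemma~\ref{th:d1d2} set, then use the triviality of that third party's action on each subset to lift local indistinguishability to the relevant bipartitions. The only addition in the paper---going slightly beyond the theorem statement itself---is a final remark that Alice's projective measurement $\{|d_1-1\rangle\langle d_1-1|,\, I-|d_1-1\rangle\langle d_1-1|\}$ locally reduces the whole set into $\mathcal{A}$ and $\mathcal{B}$, establishing that the genuine nonlocality is of type~\uppercase\expandafter{\romannumeral 1}.
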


\begin{proof} First, the $2(d_{2}+d_{3})-2$ orthogonal product states in Eqs.(\ref{eq:14}) and (\ref{eq:15}) cannot be perfectly distinguished by LOCC when all the parties are separated.

The $2d_{2}-1$ states in $\mathcal{A}$ have the same part in $C$. They are locally indistinguishable in the joint systems $AC$ and $BC$. In fact, both Alice and Bob can only perform trivial OPLMs. Consequently, $\mathcal{A}$ is locally indistinguishable in bipartitions $B|AC$ and $A|BC$.

The $2d_{3}-1$ states in $\mathcal{B}$ are locally indistinguishable in bipartitions $C|AB$ and $B|CA$ as they are the same in part $A$, and then the $2(d_{2}+d_{3})-2$ orthogonal product states cannot be perfectly distinguished under LOCC in every bipartition, which means that the set is genuinely nonlocal. Moreover, when Alice performs the measurements $\{M^{A}_{1}=|d_{1}-1\rangle_{A} \langle d_{1}-1|, M^{A}_{2}=I_{A}-M^{A}_{1}\}$, the whole set can be locally reduced to two disjoint subsets \{$|\phi_{1}\rangle, \cdots, |\phi_{2d_{2}-1}\rangle$\} and \{$|\phi_{2d_{2}}\rangle, \cdots, |\phi_{2(d_{2}+d_{3})-2}\rangle$\}. Thus they have genuine nonlocality of type \uppercase\expandafter{\romannumeral 1}.
\end{proof}

\subsection{Genuinely nonlocal sets in $\otimes^{n}_{i=1}\mathbb{C}^{d_{i}}$ $[3\leq (d_{1}-1)\leq d_{2}\leq \cdots\leq d_{n},n\geq3]$}

Next we consider the construction of orthogonal product states with genuine nonlocality of type \uppercase\expandafter{\romannumeral 1} in $\otimes^{n}_{i=1}\mathbb{C}^{d_{i}}$ $[3\leq (d_{1}-1)\leq d_{2}\leq \cdots\leq d_{n},n\geq3]$.

\begin{theorem}\label{The:2}
In $\otimes^{n}_{i=1}\mathbb{C}^{d_{i}}$ $[3\leq (d_{1}-1)\leq d_{2}\leq\cdots\leq d_{n}, n\geq3]$, the following $\sum_{i=2}^{n}(2d_{i}-1)$ orthogonal product states have genuine nonlocality of type \uppercase\expandafter{\romannumeral 1}:
\begin{equation}
\begin{aligned}
G_{1}=&\{|x_{m_{1}}\rangle_{A_{1}}|y_{m_{1}}\rangle_{A_{2}}|0\rangle_{A_{3}}|0
\rangle_{A_{4}}\cdots\\&|0\rangle_{A_{n-3}}|0\rangle_{A_{n-2}}
|0\rangle_{A_{n-1}}|1\rangle_{A_{n}}\},\\
G_{2}=&\{|1\rangle_{A_{1}}|x_{j_{2}}\rangle_{A_{2}}|z_{j_{2}}\rangle_{A_{3}}
|0\rangle_{A_{4}}\cdots\\&|0\rangle_{A_{n-3}}|0\rangle_{A_{n-2}}
|0\rangle_{A_{n-1}}|0\rangle_{A_{n}}\},\\
G_{3}=&\{|0\rangle_{A_{1}}|x_{j_{3}}\rangle_{A_{2}}
|1\rangle_{A_{3}}|z_{j_{3}}\rangle_{A_{4}}\cdots\\
&|0\rangle_{A_{n-3}}|0\rangle_{A_{n-2}}|0\rangle_{A_{n-1}}|0\rangle_{A_{n}}\},\\
& ~~~~~~~~~~~~~~~~~~\vdots~~~~~~~~~~~~\\
G_{n-2}=&\{|0\rangle_{A_{1}}|x_{j_{n-2}}\rangle_{A_{2}}
|0\rangle_{A_{3}}|0\rangle_{A_{4}}\cdots\\&|0\rangle_{A_{n-3}}
|1\rangle_{A_{n-2}}|z_{j_{n-2}}\rangle_{A_{n-1}}|0\rangle_{A_{n}}\},\\
G_{n-1}=&\{|d_{1}-1\rangle_{A_{1}}|x_{j_{n-1}}\rangle_{A_{2}}
|0\rangle_{A_{3}}|0\rangle_{A_{4}}\cdots\\&|0\rangle_{A_{n-3}}
|0\rangle_{A_{n-2}}|1\rangle_{A_{n-1}}|z_{j_{n-1}}\rangle_{A_{n}}\},\\
\end{aligned}
\end{equation}
where $\{|x_{m_{1}}\rangle|y_{m_{1}}\rangle,m_{1}=1,2,\cdots,2d_{2}-1\}$ are $2d_{2}-1$ locally indistinguishable orthogonal product states in $\mathbb{C}^{d_{1}-1}\otimes \mathbb{C}^{d_{2}}$. And $\{|x_{j_{k-1}}\rangle|z_{j_{k-1}}\rangle,j_{k-1}=1,2,\cdots,2d_{k}-1\}$ are $2d_{k}-1$ locally indistinguishable orthogonal product states in $\mathbb{C}^{d_{2}}\otimes \mathbb{C}^{d_{k}}$ $(k=3,\cdots, n)$.
\end{theorem}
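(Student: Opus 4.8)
The plan is to reduce the $n$-partite set to the two bipartite inputs of Lemma~\ref{th:d1d2}. Throughout I assume $n\ge4$; for $n=3$ the statement is Theorem~\ref{The:1}. The relevant inputs are: the $2d_{2}-1$ states $\{|x_{m_{1}}\rangle|y_{m_{1}}\rangle\}$ are LOCC-indistinguishable in $\mathbb{C}^{d_{1}-1}\otimes\mathbb{C}^{d_{2}}$ (valid since $3\le d_{1}-1\le d_{2}$), and for each $k\in\{2,\dots,n-1\}$ the states that $G_{k}$ places on the pair $(A_{2},A_{k+1})$ are LOCC-indistinguishable in $\mathbb{C}^{d_{2}}\otimes\mathbb{C}^{d_{k+1}}$ (valid since $3\le d_{2}\le d_{k+1}$). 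Two elementary facts will be used repeatedly: a state set containing an LOCC-indistinguishable subset is itself LOCC-indistinguishable; and padding one side of a bipartite state set by a single \emph{fixed} product vector cannot make it LOCC-distinguishable, because each party may create the known ancilla locally before running any protocol.

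\emph{Orthogonality.} Inside a group $G_{\ell}$ all coordinates outside the two carrying the bipartite set are constant flags, so orthogonality is inherited from the bipartite set. For $G_{\ell}\neq G_{\ell'}$ I would name a single coordinate on which the flags disagree orthogonally: $G_{1}$ versus $G_{\ell}$ with $2\le\ell\le n-2$ on $A_{n}$ ($|1\rangle$ versus $|0\rangle$); $G_{1}$ versus $G_{n-1}$ on $A_{1}$ (the $|x_{m_{1}}\rangle$ lie in the span of $|0\rangle,\dots,|d_{1}-2\rangle$, orthogonal to $|d_{1}-1\rangle$); $G_{2}$ versus $G_{\ell}$ with $3\le\ell\le n-1$ on $A_{1}$ (since $|1\rangle$ is orthogonal to $|0\rangle$ and, because $d_{1}-1\ge3$, also to $|d_{1}-1\rangle$); and $G_{\ell}$ versus $G_{\ell'}$ with $3\le\ell<\ell'\le n-1$ on $A_{\ell}$ ($|1\rangle$ versus $|0\rangle$). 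This is a finite bookkeeping check.

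\emph{Genuine nonlocality and type~\uppercase\expandafter{\romannumeral 1}.} Fix a bipartition, let $\mathcal{T}$ be the part containing $A_{2}$ and $\mathcal{T}^{c}$ the other, nonempty part. If $A_{1}\in\mathcal{T}^{c}$, then $G_{1}$ straddles the cut across $(A_{1},A_{2})$ with all other coordinates constant, so by the padding fact and Lemma~\ref{th:d1d2} the subset $G_{1}$ is LOCC-indistinguishable in this bipartition. If $A_{1}\in\mathcal{T}$, pick any $A_{p}\in\mathcal{T}^{c}$ (so $p\ge3$) and set $k=p-1\in\{2,\dots,n-1\}$; then $G_{k}$ carries its bipartite set on $A_{2}\in\mathcal{T}$ and $A_{k+1}=A_{p}\in\mathcal{T}^{c}$ with all other coordinates constant, and the same reasoning produces an LOCC-indistinguishable subset. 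Hence the full set is LOCC-indistinguishable in every bipartition, i.e.\ genuinely nonlocal. To see it is locally reducible with all parties separated, let $A_{1}$ apply the orthogonality-preserving measurement $\{M_{1}=|d_{1}-1\rangle_{A_{1}}\langle d_{1}-1|,\ M_{2}=I_{A_{1}}-M_{1}\}$: the $A_{1}$-parts of $G_{1},\dots,G_{n-2}$ all lie in the span of $|0\rangle,\dots,|d_{1}-2\rangle$ while $G_{n-1}$ carries $|d_{1}-1\rangle_{A_{1}}$, so outcome $M_{1}$ retains only $G_{n-1}$ and outcome $M_{2}$ retains only $G_{1}\cup\cdots\cup G_{n-2}$; in both branches the survivors stay mutually orthogonal, $M_{1}$ has rank one and is therefore not proportional to $I_{A_{1}}$, and states have been eliminated. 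Genuine nonlocality together with local reducibility yields type~\uppercase\expandafter{\romannumeral 1}.

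\emph{Main obstacle.} The conceptual content is light---everything glues together bipartite facts---so the real work is in bookkeeping: (i) confirming that the dichotomy $A_{1}\in\mathcal{T}$ versus $A_{1}\in\mathcal{T}^{c}$ really exhausts all bipartitions and that the selected $G_{k}$ always has its bipartite pair cut by the bipartition, and (ii) stating precisely that the constant flag coordinates can neither break orthogonality nor assist LOCC. Formalizing the padding step in (ii) is where I would be most careful.
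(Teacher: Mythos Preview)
Your proposal is correct and follows essentially the same approach as the paper: for each bipartition you locate a group $G_{k}$ whose embedded bipartite set from Lemma~\ref{th:d1d2} straddles the cut, and for local reducibility you use the identical measurement $\{|d_{1}-1\rangle_{A_{1}}\langle d_{1}-1|,\,I_{A_{1}}-|d_{1}-1\rangle_{A_{1}}\langle d_{1}-1|\}$ on $A_{1}$. Your write-up is actually more careful than the paper's own proof: you spell out the orthogonality bookkeeping, make the ``padding'' reduction explicit, and give a clean dichotomy on whether $A_{1}$ lies with $A_{2}$ or not, whereas the paper simply asserts that $A_{1},A_{2}$ (resp.\ $A_{2},A_{i+1}$) can only perform trivial OPLMs and hence each $G_{i}$ forces its pair together.
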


\begin{proof} In fact, as far as $G_{1}$ is concerned, both $A_{1}$ and $A_{2}$ can only perform trivial OPLMs by Lemma \ref{th:d1d2}. Thus perfect discrimination of $G_{1}$ can be achieved only if $A_{1}$ and $A_{2}$ join together. Similarly, for $G_{i}$ $(2\leq i\leq n-1)$, it follows from Lemma \ref{th:d1d2} that both $A_{2}$ and $A_{i+1}$ can only perform trivial OPLMs, and hence perfect discrimination of $G_{i}$ is possible only if $A_{2}$ and $A_{i+1}$ join together. The above $\sum_{i=2}^{n}(2d_{i}-1)$ states are indistinguishable in any bipartition and have genuine nonlocality. When $A_{1}$ performs the measurements $\{M^{A_{1}}_{1}=|d_{1}-1\rangle_{A_{1}} \langle d_{1}-1|,M^{A_{1}}_{2}=I_{A_{1}}-M^{A_{1}}_{1}\}$, the whole set can be locally reduced to two disjoint subsets \{$G_{1}, \cdots, G_{n-2}$\} and \{$G_{n-1}$\}. Therefore, the set consisting of the above states has genuine nonlocality of type \uppercase\expandafter{\romannumeral 1}.
\end{proof}

\section{Orthogonal product sets with genuine nonlocality of type \expandafter{\romannumeral 2} }

In this section, we construct type-\uppercase\expandafter{\romannumeral 2} genuinely nonlocal sets without entanglement, which are locally irreducible.

\subsection{Genuinely nonlocal sets in $\mathbb{C}^{d_{1}}\otimes\mathbb{C}^{d_{2}}\otimes\mathbb{C}^{d_{3}}~(3\leq d_{1}\leq d_{2}\leq d_{3})$}

In this subsection, we construct a genuinely nonlocal set of type \uppercase\expandafter{\romannumeral 2} in $\mathbb{C}^{d_{1}}\otimes\mathbb{C}^{d_{2}}\otimes\mathbb{C}^{d_{3}}~(3\leq d_{1}\leq d_{2}\leq d_{3})$.

\begin{proposition}
In $\mathbb{C}^3\otimes \mathbb{C}^4\otimes \mathbb{C}^5$, the following $14$ orthogonal product states are genuinely nonlocal:
\begin{equation}
\begin{aligned}
 |\phi_{1}\rangle=&|1\rangle_{A}|0-1\rangle_{B}|1\rangle_{C},\\
 |\phi_{2}\rangle=&|2\rangle_{A}|0-2\rangle_{B}|1\rangle_{C},\\
 |\phi_{3}\rangle=&|0-1\rangle_{A}|2\rangle_{B}|1\rangle_{C},\\
 |\phi_{4}\rangle=&|0-2\rangle_{A}|1\rangle_{B}|1\rangle_{C},\\
 |\phi_{5}\rangle=&|0-1\rangle_{A}|3\rangle_{B}|1\rangle_{C},\\
 |\phi_{6}\rangle=&|0+1\rangle_{A}|2-3\rangle_{B}|1\rangle_{C},\\
 |\phi_{7}\rangle=&|0+1+2\rangle_{A}|0+1+2+3\rangle_{B}|0+1+2+3+4\rangle_{C}, \\
 |\phi_{8}\rangle=&|1\rangle_{A}|0+1\rangle_{B}|0-1\rangle_{C},\\
 |\phi_{9}\rangle=&|2\rangle_{A}|0+1\rangle_{B}|0-2\rangle_{C},\\
 |\phi_{10}\rangle=&|0-1\rangle_{A}|0+1\rangle_{B}|2\rangle_{C},\\
 |\phi_{11}\rangle=&|0-1\rangle_{A}|0+1\rangle_{B}|3\rangle_{C},\\
 |\phi_{12}\rangle=&|0-1\rangle_{A}|0+1\rangle_{B}|4\rangle_{C},\\
 |\phi_{13}\rangle=&|0+1\rangle_{A}|0+1\rangle_{B}(|2\rangle_{C}+\omega_{3}|3\rangle_{C} +\omega_{3}^{2}|4\rangle_{C}),\\
 |\phi_{14}\rangle=&|0+1\rangle_{A}|0+1\rangle_{B}(|2\rangle_{C}+\omega_{3}^{2}|3\rangle_{C} +\omega_{3}|4\rangle_{C}).
 \end{aligned}
 \end{equation}
\end{proposition}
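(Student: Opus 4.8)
The plan is to establish genuine nonlocality straight from the definition, by showing that in each of the three bipartitions $A|BC$, $B|AC$, $C|AB$ every orthogonality-preserving local measurement --- by either party of that bipartition --- is trivial; by the standard criterion recalled in the preliminaries this already forces LOCC-indistinguishability in that bipartition. The preliminary step is the routine verification that the fourteen states are mutually orthogonal, which for every pair amounts to exhibiting one subsystem in which the two components are orthogonal; for the pairs involving the stopper $|\phi_{7}\rangle$ one uses that $|0+1+2+3\rangle_{B}$ is orthogonal to $|0-1\rangle_{B}$, $|0-2\rangle_{B}$, $|2-3\rangle_{B}$, and likewise in $C$. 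It is convenient to notice that the set splits into a ``first block'' $\{|\phi_{1}\rangle,\dots,|\phi_{7}\rangle\}$ whose first six members share the $C$-component $|1\rangle_{C}$ and whose last member is the global stopper, and a ``second block'' $\{|\phi_{8}\rangle,\dots,|\phi_{14}\rangle\}$ all of whose members share the $B$-component $|0+1\rangle_{B}$ and whose $A$- and $C$-components follow the Yu--Oh-type pattern underlying Proposition~\ref{eg:35}.

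For the bipartition $A|BC$ I would first treat Alice's side: write a POVM element $E$ as a $3\times3$ matrix in the computational basis of $\mathbb{C}^{3}_{A}$; whenever two states have non-orthogonal $BC$-components, post-measurement orthogonality forces $\langle a_{k}|E|a_{l}\rangle=0$ for the corresponding $A$-components $a_{k},a_{l}$. The pairs $(|\phi_{1}\rangle,|\phi_{2}\rangle)$, $(|\phi_{1}\rangle,|\phi_{4}\rangle)$, $(|\phi_{9}\rangle,|\phi_{10}\rangle)$, together with Hermiticity, annihilate every off-diagonal entry of $E$, and the stopper $|\phi_{7}\rangle$ paired against $|\phi_{3}\rangle$ and against $|\phi_{4}\rangle$ yields $e_{00}=e_{11}$ and $e_{00}=e_{22}$; hence $E\propto I_{A}$. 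Dually, for the joined party $BC$ write $E$ in the product basis of $\mathbb{C}^{4}_{B}\otimes\mathbb{C}^{5}_{C}$; every pair with non-orthogonal $A$-components gives $\langle b_{k}c_{k}|E|b_{l}c_{l}\rangle=0$. The clusters of states sharing the $A$-component $|0-1\rangle$ (namely $|\phi_{3}\rangle,|\phi_{5}\rangle,|\phi_{10}\rangle,|\phi_{11}\rangle,|\phi_{12}\rangle$), sharing $|0+1\rangle$ (namely $|\phi_{6}\rangle,|\phi_{13}\rangle,|\phi_{14}\rangle$), sharing $|1\rangle$ or $|2\rangle$, together with $|\phi_{7}\rangle$ --- whose $A$-component $|0+1+2\rangle_{A}$ overlaps all the others and whose $BC$-component $|0+1+2+3\rangle_{B}|0+1+2+3+4\rangle_{C}$ is a full product vector --- provide relations that kill every off-diagonal entry and equate every diagonal entry, so $E\propto I_{BC}$.

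The bipartitions $B|AC$ and $C|AB$ are handled in exactly the same spirit. In $B|AC$, for Bob's side one uses, e.g., that $|\phi_{3}\rangle$ and $|\phi_{5}\rangle$ already carry the same $AC$-component $|0-1\rangle_{A}|1\rangle_{C}$, that $|\phi_{1}\rangle$ and $|\phi_{6}\rangle$ have overlapping $AC$-components, and so on, with $|\phi_{7}\rangle$ (whose $B$-component is $|0+1+2+3\rangle_{B}$) linking the diagonal; for the joined party $AC$ one uses that all seven second-block states share $|0+1\rangle_{B}$, so every pair among them, and $|\phi_{7}\rangle$ against each, constrains $E$, together with the overlapping pairs inside the first block and across the two blocks, the stopper's $AC$-component $|0+1+2\rangle_{A}|0+1+2+3+4\rangle_{C}$ linking the diagonals. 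In $C|AB$, Charlie's side is controlled by the second-block states, whose $C$-components realise the structure of Proposition~\ref{eg:35}, again completed by $|\phi_{7}\rangle$; for the joined party $AB$ one uses that $|\phi_{1}\rangle,\dots,|\phi_{6}\rangle$ all share $|1\rangle_{C}$, so all $\binom{6}{2}$ of their pairs (plus the pairs $(|\phi_{i}\rangle,|\phi_{7}\rangle)$ with $i\le6$, plus the overlapping pairs within and across the blocks) constrain $E$, with the stopper's $AB$-component $|0+1+2\rangle_{A}|0+1+2+3\rangle_{B}$ linking the diagonals. In every case the conclusion is $E\propto I$, i.e. only trivial OPLMs exist, which gives indistinguishability in all three bipartitions and hence genuine nonlocality.

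I expect the main obstacle to be precisely the three ``joined-party'' analyses --- the measurements of $BC$, $AC$ and $AB$, whose POVM elements live in $20$-, $15$- and $12$-dimensional spaces --- where one must verify that the orthogonality constraints from overlapping pairs really do force \emph{every} off-diagonal matrix element to vanish and \emph{every} diagonal element to coincide. The delicate point is that neither block carries its own full stopper, so all the diagonal entries have to be chained through the single state $|\phi_{7}\rangle$, and one must make sure that the complementary component of $|\phi_{7}\rangle$ has nonzero overlap with enough of the remaining states to reach all the relevant basis vectors. Once this bookkeeping goes through, the proposition follows.
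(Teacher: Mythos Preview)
Your single-party analyses look fine, but the joined-party analyses cannot be completed: the claim that every orthogonality-preserving POVM element on $BC$ (and likewise on $AC$, $AB$) is proportional to the identity is false. Consider $E=|3\rangle_{B}\langle 3|\otimes|0\rangle_{C}\langle 0|$. Among all fourteen states the only one whose $BC$-component has nonzero overlap with $|3\rangle_{B}|0\rangle_{C}$ is $|\phi_{7}\rangle$, since $|\phi_{1}\rangle,\dots,|\phi_{6}\rangle$ carry $|1\rangle_{C}$ and $|\phi_{8}\rangle,\dots,|\phi_{14}\rangle$ carry $|0+1\rangle_{B}$. Hence $\langle(BC)_{i}|E|(BC)_{j}\rangle=0$ for every $i\neq j$, and $\{E,\,I_{BC}-E\}$ is a perfectly valid nontrivial OPLM for the joined party $BC$. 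The ``delicate point'' you anticipate at the end is therefore not just delicate but fatal: the single stopper $|\phi_{7}\rangle$ cannot chain all diagonal entries of a $20$-dimensional operator, and the few dozen overlap constraints available are far too sparse. What your programme would establish, were it to succeed, is \emph{strong} nonlocality (local irreducibility in every bipartition), which is strictly more than genuine nonlocality and does not hold here.

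The paper avoids the joined parties altogether by a subset argument. It observes that $\{|\phi_{i}\rangle\}_{i=1}^{7}$ has $AB$-components forming the nonlocal set $\mathcal{A}_{2}\subset\mathbb{C}^{3}\otimes\mathbb{C}^{4}$ of Lemma~\ref{th:d1d2}, while its $C$-components are pairwise non-orthogonal; citing Rout \emph{et al.}, when the ancilla states on $C$ are pairwise non-orthogonal Charlie joining either Alice or Bob cannot help, so this subset --- and hence the full set --- is indistinguishable in the cuts $A|BC$ and $B|CA$. Symmetrically, the subset $\{|\phi_{4}\rangle\}\cup\{|\phi_{i}\rangle\}_{i=7}^{14}$ has $AC$-components forming the nonlocal set $\mathcal{B}_{2}\subset\mathbb{C}^{3}\otimes\mathbb{C}^{5}$ with pairwise non-orthogonal $B$-components, which handles $C|AB$. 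No analysis of a $20$-, $15$- or $12$-dimensional operator is ever needed.
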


\begin{proof} By Lemma \ref{th:d1d2}, $\mathcal{A}_{2}=\{|1\rangle|0-1\rangle, |2\rangle|0-2\rangle, |0-1\rangle|2\rangle, |0-2\rangle|1\rangle, |0-1\rangle|3\rangle, |0+1\rangle|2-3\rangle, |0+1+2\rangle|0+1+2+3\rangle\}$ is a locally indistinguishable set in $\mathbb{C}^3\otimes\mathbb{C}^4$. $\mathcal{B}_{2}=\{|1\rangle|0-1\rangle, |2\rangle|0-2\rangle, |0-1\rangle|2\rangle, |0-2\rangle|1\rangle, |0-1\rangle|3\rangle, |0-1\rangle|4\rangle, |0+1\rangle(|2\rangle+\omega_{3}|3\rangle +\omega_{3}^{2}|4\rangle),|0+1\rangle (|2\rangle+\omega_{3}^{2}|3\rangle +\omega_{3}|4\rangle), |0+1+2\rangle|0+1+2+3+4\rangle\}$ is a locally indistinguishable set in $\mathbb{C}^3\otimes\mathbb{C}^5$.

As for the states in the subset $\{ |\phi_{i}\rangle\} _{i=1}^{7}$, the presence of $\mathcal{A}_{2}$ is between Alice and Bob. The states in the subsystem $C$ are not orthogonal. Then, even if Charlie comes together with either Alice or Bob, it is not possible to perfectly distinguish these states \cite{r13}, which leads to the local indistinguishability in partitions $B|CA$ and $A|BC$. For the subset  $\{ |\phi_{4}\rangle \bigcup \{ |\phi_{i}\rangle\} _{i=7}^{14} \}$, the presence of $\mathcal{B}_{2}$ between Alice and Charlie is evident. Bob's subsystems are not orthogonal. Then, even if Bob comes together with either Alice or Charlie, it is not possible to perfectly distinguish the states \cite{r13}, leading to the local indistinguishability in partitions $C|AB$ and $A|BC$. Thus the above set turns out to be genuinely nonlocal.
\end{proof}

Based on the above proposition, we put forward the following general construction in tripartite systems.

\begin{theorem}\label{The:3}
In $\mathbb{C}^{d_{1}}\otimes\mathbb{C}^{d_{2}}\otimes\mathbb{C}^{d{_3}}$ $(3\leq d_{1}\leq d_{2}\leq d_{3})$, the following $2d_{2}+2d_{3}-4$ orthogonal product states have genuine nonlocality of type \uppercase\expandafter{\romannumeral 2}:
\begin{equation} \footnotesize
\begin{aligned}
|\phi_{i}\rangle=&|i\rangle_{A}|0-i\rangle_{B}|1\rangle_{C},~1\leq i \leq d_{1}-1,\\
|\phi_{i+d_{1}-1}\rangle=&|0-i\rangle_{A}|j\rangle_{B}|1\rangle_{C},~1\leq i\leq d_{1}-2, \\&j=i+1;~i=d_{1}-1,~j=1,\\
|\phi_{m+d_{1}-1}\rangle=&|0-1\rangle_{A}|m\rangle_{B}|1\rangle_{C},~d_{1}\leq m\leq d_{2}-1,\\
|\phi_{s_{1}+d_{1}+d_{2}-2}\rangle=&|0+1\rangle_{A}(|2\rangle_{B}\\
&+\sum\limits_{t_{1}=1}^{d_{2}-d_{1}}\omega_{d_{2}-d_{1}+1}^{s_{1}t_{1}} |t_{1}+d_{1}-1\rangle_{B})|1\rangle_{C},\\
&1\leq s_{1}\leq d_{2}-d_{1},\\
|\phi_{2d_{2}-1}\rangle=&|0+1+\cdots+(d_{1}-1)\rangle_{A} \\&|0+1+\cdots+(d_{2}-1)\rangle_{B}\\&|0+1+\cdots+(d_{3}-1)\rangle_{C},\\
|\phi_{i+2d_{2}-1}\rangle=&|i\rangle_{A}|0+1\rangle_{B}|0-i\rangle_{C},~1\leq i\leq d_{1}-1,\\
|\phi_{i+d_{1}+2d_{2}-2}\rangle=&|0-i\rangle_{A}|0+1\rangle_{B}|j\rangle_{C},\\
&1\leq i\leq d_{1}-2,~ j=i+1,\\
|\phi_{n+d_{1}+2d_{2}-3}\rangle=&|0-1\rangle_{A} |0+1\rangle_{B}|n\rangle_{C},~d_{1}\leq n\leq d_{3}-1,\\
|\phi_{s_{2}+d_{1}+2d_{2}+d_{3}-4}\rangle=&|0+1\rangle_{A} |0+1\rangle_{B} \\&(|2\rangle_{C}+\sum\limits_{t_{2}=1}^{d_{3}-d_{1}}\omega_{d_{3}-d_{1}+1}^{s_{2}t_{2}} |t_{2}+d_{1}-1\rangle_{C}), \\&1\leq s_{2}\leq d_{3}-d_{1}.\\
\end{aligned}
\end{equation}
\end{theorem}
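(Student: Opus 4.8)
The plan is to mimic the structure of the preceding $\mathbb{C}^3\otimes\mathbb{C}^4\otimes\mathbb{C}^5$ proposition, lifting it to general dimensions by invoking Lemma~\ref{th:d1d2} twice. First I would split the $2d_2+2d_3-4$ states into two overlapping blocks: the ``upper'' block $\{|\phi_i\rangle\}_{i=1}^{2d_2-1}$, whose $AB$-parts are exactly the states of $\mathcal{S}_2$ in $\mathbb{C}^{d_1}\otimes\mathbb{C}^{d_2}$ (with the $\mathcal{S}_2$-construction of size $2d_2-1$), and the ``lower'' block $\{|\phi_4\rangle\}\cup\{|\phi_i\rangle\}_{i=2d_2-1}^{2d_2+2d_3-4}$, whose $AC$-parts form (up to relabelling of the middle index convention) the $\mathcal{S}_2$-type set of size $2d_3-1$ in $\mathbb{C}^{d_1}\otimes\mathbb{C}^{d_3}$. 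The inclusion of $|\phi_4\rangle$ and the ``stopper'' state $|\phi_{2d_2-1}\rangle$ in both blocks is what glues the two halves together and, as I will argue at the end, is what makes the set locally irreducible rather than reducible.

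Next I would establish genuine nonlocality bipartition by bipartition. For the cuts $A|BC$ and $B|CA$: restrict attention to the upper block; its $C$-register states $|1\rangle_C$ and $|0{+}1{+}\cdots{+}(d_3{-}1)\rangle_C$ are non-orthogonal, so by the Rout~\emph{et al.}~argument~\cite{r13} Charlie cannot help even when merged with Alice or with Bob, while within $AB$ the set is $\mathcal{S}_2$, which is LOCC-indistinguishable by Lemma~\ref{th:d1d2}; hence no LOCC protocol succeeds across $A|BC$ or $B|CA$. Symmetrically, for the cuts $C|AB$ and $A|BC$ (the second $A|BC$ assertion being harmless — it is already covered, but the lower block gives an independent witness): restrict to the lower block; its $B$-register states $|1\rangle_B$ and $|0{+}1{+}\cdots{+}(d_2{-}1)\rangle_B$ and $|0{+}1\rangle_B$ are pairwise non-orthogonal in the relevant pairs, so Bob cannot help when joined to Alice or to Charlie, while within $AC$ the set is again an $\mathcal{S}_2$-family, LOCC-indistinguishable by Lemma~\ref{th:d1d2}. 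Collecting the four cuts $A|BC$, $B|CA$, $C|AB$ (for a tripartite system these are all the bipartitions) gives genuine nonlocality.

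For the type-\uppercase\expandafter{\romannumeral 2} part I must show the set is locally irreducible when all three parties are separated, i.e.\ that any OPLM by Alice, Bob, or Charlie is trivial. The argument is the standard ``block-diagonal OPLM'' analysis: write a putative OPLM element $E=M^\dagger M$ in the computational basis of the acting party and use the orthogonality constraints $\langle\phi_k|E\otimes I|\phi_\ell\rangle=0$ for all non-orthogonal pairs $(k,\ell)$ of the remaining two parties. For Alice I would run through the relations coming from the $|0{-}i\rangle_A$ states paired against $|0{+}1\rangle_A$, $|0{+}1{+}\cdots\rangle_A$ and $|i\rangle_A$ states in both blocks to force all off-diagonal entries of $E$ to vanish and all diagonal entries to coincide; the crossover state $|\phi_4\rangle$ (and $|\phi_{2d_2-1}\rangle$), appearing in \emph{both} blocks, is precisely what couples the Alice-constraints of the two halves so that no nontrivial block-diagonal $E$ survives. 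The analogous computations for Bob (using the $|0{+}1\rangle_B$ versus $|0{-}i\rangle_B$, $|m\rangle_B$, $|0{+}1{+}\cdots\rangle_B$ relations) and for Charlie (the $|1\rangle_C$ versus $|0{-}i\rangle_C$, $|0{+}1{+}\cdots\rangle_C$, and $\omega_3$-phase states) complete the irreducibility claim.

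I expect the main obstacle to be the irreducibility verification, not the nonlocality: one must check that the overlap of the two blocks in the single states $|\phi_4\rangle$ and $|\phi_{2d_2-1}\rangle$ genuinely forbids a nontrivial OPLM for \emph{every} party — in particular that for Bob, whose register carries the $|0{+}1\rangle_B$ component across the entire lower block and the full $\mathcal{S}_2$-structure on the upper block, the two families of linear constraints are jointly rank-deficient by exactly the right amount. This is where a careful, systematic listing of non-orthogonal state pairs (best organized in a table as in the appendices) is indispensable, and it is the step I would write out in full in the appendix; the bipartition arguments, by contrast, are immediate consequences of Lemma~\ref{th:d1d2} and the non-orthogonality observation of Ref.~\cite{r13}.
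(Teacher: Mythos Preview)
Your approach is essentially the same as the paper's: two overlapping blocks carrying the $\mathcal{S}_2$-structure on $AB$ and on $AC$, the Rout~\emph{et al.}\ non-orthogonality argument for the third party in each block, and then local irreducibility via OPLM-triviality. Two small corrections are worth making. First, the shared state in the lower block should be $|\phi_{2d_1-2}\rangle=|0-(d_1{-}1)\rangle_A|1\rangle_B|1\rangle_C$ in general, not $|\phi_4\rangle$ (that index is specific to $d_1=3$). Second, you are overcomplicating the irreducibility step and manufacturing a difficulty that is not there: Lemma~\ref{th:d1d2} is proved precisely by showing that for $\mathcal{S}_2$ each party can only perform a \emph{trivial} OPLM, and those constraints transfer verbatim to the tripartite set because within each block the third party's states have nonzero mutual overlap. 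Hence the upper block alone already forces both Alice's and Bob's OPLM to be trivial, and the lower block alone forces Charlie's (and again Alice's) OPLM to be trivial---no ``coupling'' through $|\phi_{2d_1-2}\rangle$ or $|\phi_{2d_2-1}\rangle$ is needed, and no fresh table of non-orthogonal pairs has to be compiled. The paper accordingly dispatches the type-\uppercase\expandafter{\romannumeral 2} claim in a single sentence.
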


\begin{proof} Consider the states $\{ |\phi_{i}\rangle\} _{i=1}^{2d_{2}-1}$. It follows from Lemma \ref{th:d1d2} that there is a locally indistinguishable set in $\mathbb{C}^{d_{1}}\otimes\mathbb{C}^{d_{2}}$ between Alice and Bob. Furthermore, Charlie's states are all not orthogonal. As a result, even if Charlie comes together with either Alice or Bob, it is not possible to perfectly distinguish these states \cite{r13}, leading to the local indistinguishability in partitions $B|CA$ and $A|BC$.
For the states $\{|\phi_{2d_{1}-2}\rangle \bigcup \{ |\phi_{i}\rangle\} _{i=2d_{2}-1}^{2d_{2}+2d_{3}-4} \}$, there is a locally indistinguishable set in $\mathbb{C}^{d_{1}}\otimes\mathbb{C}^{d_{3}}$ between Alice and Charlie, and Bob's states are all not orthogonal. Then, even if Bob comes together with either Alice or Charlie, it is not possible to perfectly distinguish these states \cite{r13}, leading to the local indistinguishability in partitions $C|AB$ and $A|BC$. Consequently the above set turns out to be genuinely nonlocal.

Moreover, the above set is locally irreducible when all the parties are separated, since they can only perform trivial OPLMs. Hence, it is genuinely nonlocal of type \uppercase\expandafter{\romannumeral 2}.
\end{proof}

\subsection{Genuinely nonlocal sets in $\otimes^{n}_{i=1}\mathbb{C}^{d_{i}}$ $(3\leq d_{1}\leq d_{2}\leq \cdots\leq d_{n},n\geq4)$}

Next we consider the construction of orthogonal product states with genuine nonlocality of type \uppercase\expandafter{\romannumeral 2} in $\otimes^{n}_{i=1}\mathbb{C}^{d_{i}}$ $(3\leq d_{1}\leq d_{2}\leq \cdots\leq d_{n},n\geq4)$.

Denote the $2d_{2}-1$ locally indistinguishable orthogonal product states as $\{|x_{j_{1}}\rangle|y_{j_{1}}\rangle,j_{1}=1,2,\cdots,2d_{2}-1\}$ in $\mathbb{C}^{d_{1}}\otimes \mathbb{C}^{d_{2}}$ and  $2d_{k}-1$ locally indistinguishable orthogonal product states as $\{|x_{j_{k-1}}\rangle|z_{j_{k-1}}\rangle,j_{k-1}=1,2,\cdots,2d_{k}-1\}$ in $\mathbb{C}^{d_{2}}\otimes \mathbb{C}^{d_{k}}$ $(k=3,\cdots, n)$, respectively.

\begin{theorem}\label{The:4}
In $\otimes^{n}_{i=1}\mathbb{C}^{d_{i}}$ $(3\leq d_{1}\leq d_{2}\leq \cdots\leq d_{n}, n\geq4)$, the following set $\mathcal{S}_{3}$ of $\sum_{i=2}^{n}(2d_{i}-1)$ orthogonal product states have genuine nonlocality of type \uppercase\expandafter{\romannumeral 2}:
\begin{equation}
\begin{aligned}
G_{1}=&\{|x_{j_{1}}\rangle_{A_{1}}|y_{j_{1}}\rangle_{A_{2}}|0\rangle_{A_{3}}
|0\rangle_{A_{4}}\cdots\\&|0\rangle_{A_{n-3}}|0\rangle_{A_{n-2}}
|0\rangle_{A_{n-1}}|1\rangle_{A_{n}}\},\\
G_{2}=&\{|1\rangle_{A_{1}}|x_{j_{2}}\rangle_{A_{2}}|z_{j_{2}}\rangle_{A_{3}}
|0\rangle_{A_{4}}\cdots\\&|0\rangle_{A_{n-3}}|0\rangle_{A_{n-2}}|0\rangle_{A_{n-1}}
|0\rangle_{A_{n}}\},\\
G_{3}=&\{|0\rangle_{A_{1}}|x_{j_{3}}\rangle_{A_{2}}
|1\rangle_{A_{3}}|z_{j_{3}}\rangle_{A_{4}}\cdots\\&|0\rangle_{A_{n-3}}
|0\rangle_{A_{n-2}}|0\rangle_{A_{n-1}}|0\rangle_{A_{n}}\},\\
& ~~~~~~~~~~~~~~~~~~\vdots~~~~~~~~~~~~\\
G_{n-2}=&\{|0\rangle_{A_{1}}|x_{j_{n-2}}\rangle_{A_{2}}|0\rangle_{A_{3}}
|0\rangle_{A_{4}}\cdots\\&|0\rangle_{A_{n-3}}|1\rangle_{A_{n-2}}
|z_{j_{n-2}}\rangle_{A_{n-1}}|0\rangle_{A_{n}}\},\\
G_{n-1}=&\{|0\rangle_{A_{1}}|x_{j_{n-1}}\rangle_{A_{2}}|0\rangle_{A_{3}}
|0\rangle_{A_{4}}\cdots\\&|0\rangle_{A_{n-3}}|0\rangle_{A_{n-2}}
|1\rangle_{A_{n-1}}|z_{j_{n-1}}\rangle_{A_{n}}\}.
\end{aligned}
\end{equation}
\end{theorem}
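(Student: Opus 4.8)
The plan is to prove two things about $\mathcal S_3$: (i) it is locally indistinguishable under \emph{every} bipartition of $\{A_1,\dots,A_n\}$, which is genuine nonlocality, and (ii) when all $n$ parties are separated, no party can perform a nontrivial orthogonality-preserving local measurement, so $\mathcal S_3$ is locally irreducible and hence of type \uppercase\expandafter{\romannumeral 2}. Both parts exploit the ``one active pair per group'' structure: in $G_1$ only $A_1,A_2$ vary (carrying the locally indistinguishable set $\{|x_{j_1}\rangle|y_{j_1}\rangle\}$ in $\mathbb C^{d_1}\otimes\mathbb C^{d_2}$ of Lemma~\ref{th:d1d2}), in $G_m$ with $2\le m\le n-1$ only $A_2,A_{m+1}$ vary (carrying $\{|x_{j_m}\rangle|z_{j_m}\rangle\}$ in $\mathbb C^{d_2}\otimes\mathbb C^{d_{m+1}}$), and every remaining party is frozen in $|0\rangle$ or the ``marker'' state $|1\rangle$, the same state for all members of the group.

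For genuine nonlocality, I would fix a bipartition $S|T$ with $S,T\neq\emptyset$ and assume, without loss of generality, $A_2\in S$. If $A_1\in T$, restrict $\mathcal S_3$ to $G_1$: on the $S$-side every state is $|y_{j_1}\rangle_{A_2}$ tensored with one fixed product state, and on the $T$-side it is $|x_{j_1}\rangle_{A_1}$ tensored with another fixed product state. Since those ancillas are fixed and can be created locally, any LOCC protocol distinguishing $\mathcal S_3$ across $S|T$ would in particular distinguish $G_1$, hence would yield a two-party LOCC protocol distinguishing $\{|x_{j_1}\rangle|y_{j_1}\rangle\}$ between $A_1$ and $A_2$, contradicting Lemma~\ref{th:d1d2}. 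If instead $A_1\in S$, then since $T\neq\emptyset$ some $A_j$ with $3\le j\le n$ lies in $T$, and the same ``frozen ancilla'' reduction applied to $G_{j-1}$ (whose active pair $(A_2,A_j)$ straddles the cut) contradicts local indistinguishability of the corresponding building-block set in $\mathbb C^{d_2}\otimes\mathbb C^{d_j}$. These two cases cover all bipartitions, so $\mathcal S_3$ is genuinely nonlocal.

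For type \uppercase\expandafter{\romannumeral 2}, I would first check that $\mathcal S_3$ is genuinely orthogonal: members of a common group are orthogonal because the building-block sets are, and for $m\neq m'$ the groups $G_m$ and $G_{m'}$ are orthogonal through a marker coordinate --- a position carrying $|1\rangle$ in one group and $|0\rangle$ in the other --- which is exactly the point at which $n\ge 4$ is used (for $n=3$ there is no spare coordinate to separate $G_1$ from $G_{n-1}$). Then I would show every single-party OPLM is trivial. Take $A_2$ with POVM $\{E_s\}$; preserving orthogonality on $G_1$ requires $\langle x_{j_1}|x_{j_1'}\rangle\langle y_{j_1}|E_s|y_{j_1'}\rangle=0$ for all $j_1\neq j_1'$, which is precisely the orthogonality-preservation condition for the second party's measurement on $\{|x_{j_1}\rangle|y_{j_1}\rangle\}$; by the argument establishing Lemma~\ref{th:d1d2} in Appendix~\ref{app:2} this forces $E_s\propto I$ on $\mathrm{span}\{|y_{j_1}\rangle\}$, and since the $B$-components of the set in Eq.~(\ref{eq:d1d2}) span all of $\mathbb C^{d_2}$, $E_s\propto I_{\mathbb C^{d_2}}$. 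The same reasoning applied to $A_1$ via $G_1$ and to each $A_{m+1}$ via $G_m$ (using that the first, resp. second, components of the relevant building-block set span the full local space) gives triviality for every party; cross-group terms impose no new constraint because a scalar multiple of the identity preserves the orthogonality already guaranteed by the markers. Hence $\mathcal S_3$ is locally irreducible when all parties are separated, so it is genuinely nonlocal of type \uppercase\expandafter{\romannumeral 2}.

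The main obstacle I anticipate is the type-\uppercase\expandafter{\romannumeral 2} part: one must verify, party by party, that the global orthogonality-preservation requirement localizes to a single building-block group and that the spanning property of that group's local components then forces a scalar measurement, and along the way confirm that $\mathcal S_3$ is genuinely orthogonal --- the step that actually pins down the hypothesis $n\ge 4$. The genuine-nonlocality part is comparatively routine: it reduces to the combinatorial observation that every bipartition separates the active pair of some $G_m$, followed by the frozen-ancilla reduction to Lemma~\ref{th:d1d2}.
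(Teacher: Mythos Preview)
Your proposal is correct and follows essentially the same approach as the paper: both argue genuine nonlocality by observing that every bipartition separates the active pair $(A_1,A_2)$ of $G_1$ or $(A_2,A_{m+1})$ of some $G_m$ and then reducing to Lemma~\ref{th:d1d2}, and both obtain type~\uppercase\expandafter{\romannumeral 2} by noting that each party's OPLM is forced to be trivial by the building-block set inside an appropriate $G_m$. Your write-up is more explicit than the paper's terse version --- you spell out the frozen-ancilla reduction, the cross-group orthogonality via marker coordinates, and where $n\ge4$ enters --- but the underlying strategy is the same.
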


\begin{proof} In fact, for $G_{1}$ both $A_{1}$ and $A_{2}$ can only perform trivial OPLMs by Lemma \ref{th:d1d2}. Thus perfect discrimination of $G_{1}$ can be achieved only if $A_{1}$ and $A_{2}$ join together. Similarly, for $G_{i}$ $(2\leq i\leq n-1)$ it follows from Lemma \ref{th:d1d2} that both $A_{2}$ and $A_{i+1}$ can only perform trivial OPLMs, and hence perfect discrimination of $G_{i}$ is possible only if $A_{2}$ and $A_{i+1}$ join together. Hence, the above $\sum_{i=2}^{n}(2d_{i}-1)$ states are indistinguishable in any bipartition, and the set $\mathcal{S}_{3}$ has genuine nonlocality.

Moreover, the above set of orthogonal product states is locally irreducible when all the parties are separated, because they can only perform trivial OPLMs. So $\mathcal{S}_{3}$ is genuinely nonlocal of type \uppercase\expandafter{\romannumeral 2}.
\end{proof}

\section{Conclusion}
The quantum nonlocality from the perspective of state discrimination is of great significance in quantum cryptographic protocols such as quantum secret sharing and quantum data hiding \cite{app1,app2,app3,app4,app5,app6,app7,app8}. In this paper, we have categorically presented the constructions of genuinely nonlocal sets without entanglement. For the type-\uppercase\expandafter{\romannumeral 1} genuinely nonlocal sets, we have put forward a construction in $\mathbb{C}^{d_{1}}\otimes\mathbb{C}^{d_{2}}\otimes\mathbb{C}^{d_{3}}~[3\leq (d_{1}-1)\leq d_{2}\leq d_{3}]$. Then we generalized it to \textit{n}-partite systems $\otimes^{n}_{i=1}\mathbb{C}^{d_{i}}$ $[3\leq (d_{1}-1)\leq d_{2}\leq\cdots\leq d_{n}, n\geq3]$ with size of $\sum_{i=2}^{n}(2d_{i}-1)$. For the type-\uppercase\expandafter{\romannumeral 2} genuinely nonlocal sets, we have presented a construction containing $2d_{2}+2d_{3}-4$ states in $\mathbb{C}^{d_{1}}\otimes\mathbb{C}^{d_{2}}\otimes\mathbb{C}^{d_{3}}$ $(3\leq d_{1}\leq d_{2}\leq d_{3})$. Furthermore, we have obtained a genuinely nonlocal set of type \uppercase\expandafter{\romannumeral 2} in $\otimes^{n}_{i=1}\mathbb{C}^{d_{i}}$ $(3\leq d_{1}\leq d_{2}\leq\cdots\leq d_{n},n\geq4)$ with size of $\sum_{i=2}^{n}(2d_{i}-1)$; see Table~ \ref{members} for these constructions of genuinely nonlocal sets without entanglement.
\begin{table}
	\newcommand{\tabincell}[2]{\begin{tabular}{@{}#1@{}}#2\end{tabular}}
	\centering
	\caption{\label{members}Genuinely nonlocal sets without entanglement.}
	\begin{tabular}{cccc}
		\toprule
		\hline
		\hline
		\specialrule{0em}{1.5pt}{1.5pt}
		~Reference~~&~~~System~~~&~~~Cardinality~~~&~~~Typology~~~\\
		\specialrule{0em}{1.5pt}{1.5pt}
		\midrule
		\hline
		\specialrule{0em}{1.5pt}{1.5pt}
		\tabincell{c}{\cite{r13}}&\tabincell{c}{$\mathbb{C}^{4}\otimes \mathbb{C}^{4}\otimes \mathbb{C}^{4}$}&\tabincell{c}{$64$}&\tabincell{c}{Type \uppercase\expandafter{\romannumeral 1}} \\
		\specialrule{0em}{3.5pt}{3.5pt}
		\tabincell{c}{\cite{r13}}&\tabincell{c}{$\mathbb{C}^{3}\otimes \mathbb{C}^{3}\otimes \mathbb{C}^{3}$}&\tabincell{c}{$27$}&\tabincell{c}{Type \uppercase\expandafter{\romannumeral 2}} \\
		\specialrule{0em}{3.5pt}{3.5pt}
		\tabincell{c}{\cite{r13}}&\tabincell{c}{$\mathbb{C}^{4}\otimes \mathbb{C}^{4}\otimes \mathbb{C}^{4}$}&\tabincell{c}{$64$}&\tabincell{c}{Type \uppercase\expandafter{\romannumeral 2}} \\
		\specialrule{0em}{3.5pt}{3.5pt}
		\tabincell{c}{\cite{r9}}&\makecell{\tabincell{c}{$\mathbb{C}^{x}\otimes \mathbb{C}^{y}\otimes \mathbb{C}^{z}$\\$(x,z\geq 3,y\geq 4)$}}&\makecell{\tabincell{c}{$2x+4y$\\$+2z-8$}}&\tabincell{c}{Type \uppercase\expandafter{\romannumeral 1}}\\
		\specialrule{0em}{3.5pt}{3.5pt}
		\tabincell{c}{\cite{r14}}&\tabincell{c}{$\mathbb{C}^{4}\otimes \mathbb{C}^{3}\otimes \mathbb{C}^{3}$}&\tabincell{c}{$14$}&\tabincell{c}{Type \uppercase\expandafter{\romannumeral 2}} \\
		\specialrule{0em}{3.5pt}{3.5pt}
		 \tabincell{c}{\cite{r14}}&\tabincell{c}{$\mathbb{C}^{m+2}\otimes(\mathbb{C}^{3})^{\otimes m}$}&\tabincell{c}{$6m+2$}&\tabincell{c}{Type \uppercase\expandafter{\romannumeral 2}} \\
		\specialrule{0em}{3.5pt}{3.5pt}
		\tabincell{c}{\cite{r14}}&\tabincell{c}{$\mathbb{C}^{6}\otimes \mathbb{C}^{5}\otimes \mathbb{C}^{5}$}&\tabincell{c}{$42$}&\tabincell{c}{Type \uppercase\expandafter{\romannumeral 2}} \\
        \specialrule{0em}{3.5pt}{3.5pt}
        \tabincell{c}{Theorem \ref{The:1}}&\tabincell{c}{$\mathbb{C}^{d_{1}}\otimes\mathbb{C}^{d_{2}}\otimes\mathbb{C}^{d_{3}}$\\$[3\leq (d_{1}-1)\leq d_{2}\leq d_{3}]$}&\tabincell{c}{$2(d_{2}+d_{3})-2$}&\tabincell{c}{Type \uppercase\expandafter{\romannumeral 1}} \\
		\specialrule{0em}{3.5pt}{3.5pt}
        \tabincell{c}{Theorem \ref{The:2}}&\tabincell{c}{$\otimes^{n}_{i=1}\mathbb{C}^{d_{i}}$}&\tabincell{c}{$\sum_{i=2}^{n}(2d_{i}-1)$}&\tabincell{c}{Type \uppercase\expandafter{\romannumeral 1}} \\
		\specialrule{0em}{3.5pt}{3.5pt}
        \tabincell{c}{Theorem \ref{The:3}}&\tabincell{c}{$\mathbb{C}^{d_{1}}\otimes\mathbb{C}^{d_{2}}\otimes\mathbb{C}^{d_{3}}$\\$(3\leq d_{1}\leq d_{2}\leq d_{3})$}&\tabincell{c}{$2(d_{2}+d_{3})-4$}&\tabincell{c}{Type \uppercase\expandafter{\romannumeral 2}} \\
		\specialrule{0em}{3.5pt}{3.5pt}
        \tabincell{c}{Theorem \ref{The:4}}&\tabincell{c}{$\otimes^{n}_{i=1}\mathbb{C}^{d_{i}}$}&\tabincell{c}{$\sum_{i=2}^{n}(2d_{i}-1)$}&\tabincell{c}{Type \uppercase\expandafter{\romannumeral 2}} \\
		\bottomrule
		\specialrule{0em}{1.5pt}{1.5pt}
		\hline
		\hline
	\end{tabular}
\end{table}

As the subset of a locally distinguishable set is locally distinguishable, while the superset of a locally indistinguishable set is locally indistinguishable, it is more interesting to explore the construction with small cardinality. Compared with the type-\uppercase\expandafter{\romannumeral 1} genuinely nonlocal set of Li \textit{et al}. in Ref. \cite{r9}, our construction contains fewer orthogonal product states in general tripartite systems. In particular, the genuinely nonlocal set in $\mathbb{C}^{4}\otimes\mathbb{C}^{3}\otimes\mathbb{C}^{3}$ has only ten states, which is less than the ones given by Rout \textit{et al.} in Ref. \cite{r14}. However, there are still many interesting problems that deserve further consideration, for instance, how to characterize the bound of a genuine nonlocal set and how to quantify the size of the nonlocality for a nonlocal set.

\section*{Acknowledgments}

This work is supported by the Basic Research Project of Shijiazhuang Municipal Universities in Hebei Province (Grant No. 241790697A), the Natural Science Foundation of Hebei Province (Grant No. F2021205001), the NSFC (Grants No. 62272208, No. 12171044, and No. 12075159), and the specific research fund of the Innovation Platform for Academicians of Hainan Province.

\appendix

\section{Proof of Proposition \ref{eg:35}} \label{app:1}

\begin{proof}
In order to distinguish these states, at least one part performs nontrivial measurements; that is, not all measurement operators $E_{m}$ are proportional to the identity operator, and it is necessary to preserve the orthogonality of the postmeasurement states for further discrimination.

$(i)$ When Alice performs orthogonality-preserving local measurements, the corresponding POVM elements can be expressed as $E_{1}=(a_{i,j})_{i,j \in Z_{3}}$. For $|\psi\rangle,|\phi\rangle\in \mathcal{S}$, $|\psi\rangle\neq |\phi\rangle$, we have
\begin{equation}\label{eq:9}
   \langle \psi| E_1\otimes I_2|\phi\rangle=0.
\end{equation}

First, with respect to $|\phi_{1}\rangle$ and $|\phi_{2}\rangle$, we have $\langle1|E_{1}|2\rangle \langle0-1|I_{2}|0-2\rangle =0$, i.e., $\langle1|E_{1}|2\rangle=0$. Therefore, $a_{1,2}=a_{2,1}=0$.

For $|\phi_{1}\rangle$ and $|\phi_{4}\rangle$, we have $\langle1|E_{1}|0-2\rangle \langle0-1|I_{2}|1\rangle =0$, i.e., $\langle1|E_{1}|0\rangle=\langle1|E_{1}|2\rangle=0$. Hence, $a_{1,0}=a_{0,1}=0$.

For $|\phi_{2}\rangle$ and $|\phi_{3}\rangle$, we have $\langle2|E_{1}|0-1\rangle \langle0-2|I_{2}|2\rangle =0$, i.e., $\langle2|E_{1}|0\rangle=\langle2|E_{1}|1\rangle=0$. Then $a_{2,0}=a_{0,2}=0$.

For $|\phi_{3}\rangle$ and $|\phi_{9}\rangle$, we have $\langle0-1|E_{1}|0+1+2\rangle \langle 2|I_{2}|0+1+2+3+4\rangle =0$, i.e., $\langle0-1|E_{1}|0+1+2\rangle=0$, $\langle0|E_{1}|0\rangle-\langle1|E_{1}|1\rangle=0$. So $\langle0|E_{1}|0\rangle=\langle1|E_{1}|1\rangle$. Similarly, for $|\phi_{4}\rangle$ and $|\phi_{9}\rangle$, it is easy to verify that $\langle0|E_{1}|0\rangle=\langle2|E_{1}|2\rangle$. Thus, $a_{0,0}=a_{1,1}=a_{2,2}$.

Therefore, we have $E_{1}\propto I$, which implies that Alice cannot start with a nontrivial measurement.

$(ii)$ When Bob performs orthogonality-preserving local measurements, the corresponding POVM elements can be expressed as $E_{2}=(b_{i,j})_{i,j \in Z_{5}}$. For $|\psi\rangle,|\phi\rangle\in \mathcal{S}$, $|\psi\rangle\neq |\phi\rangle$, we have
	\begin{equation}\label{eq:10}
		\langle \psi| I_1\otimes E_2|\phi\rangle=0.
	\end{equation}

Let us first consider $|\phi_{3}\rangle$ and $|\phi_{4}\rangle$. We have $\langle0-1|I_{1}|0-2\rangle \langle2|E_{2}|1\rangle =0$, i.e., $\langle2|E_{2}|1\rangle=0$, then $b_{1,2}=b_{2,1}=0$.

Next, for $|\phi_{1}\rangle$ and $|\phi_{3}\rangle$, we have $\langle1|I_{1}|0-1\rangle \langle0-1|E_{2}|2\rangle =0$, i.e., $\langle0|E_{2}|2\rangle=\langle1|E_{2}|2\rangle=0$, then $b_{0,2}=b_{2,0}=0$.

For $|\phi_{2}\rangle$ and $|\phi_{4}\rangle$, we have $\langle2|I_{1}|0-2\rangle \langle0-2|E_{2}|1\rangle =0$, i.e., $\langle0|E_{2}|1\rangle=\langle2|E_{2}|1\rangle=0$, so $b_{0,1}=b_{1,0}=0$.

For $|\phi_{5}\rangle$ and $|\phi_{6}\rangle$, we have $\langle0-1|I_{1}|0-1\rangle \langle3|E_{2}|4\rangle =0$, i.e., $\langle3|E_{2}|4\rangle=0$, then $b_{3,4}=b_{4,3}=0$. Similarly for $|\phi_{3}\rangle$ and $|\phi_{5}\rangle$, $|\phi_{3}\rangle$ and $|\phi_{6}\rangle$, we derive $b_{2,3}=b_{3,2}=0$ and $b_{2,4}=b_{4,2}=0$, respectively.

For $|\phi_{4}\rangle$ and $|\phi_{5}\rangle$, we have $\langle0-2|I_{1}|0-1\rangle \langle1|E_{2}|3\rangle =0$, i.e., $\langle1|E_{2}|3\rangle =0$, then $b_{1,3}=b_{3,1}=0$. Similarly for $|\phi_{4}\rangle$ and $|\phi_{6}\rangle$, a simple calculation gives $b_{1,4}=b_{4,1}=0$.

From $|\phi_{1}\rangle$ and $|\phi_{5}\rangle$, $|\phi_{1}\rangle$ and $|\phi_{6}\rangle$, it is easy to obtain that $\langle0-1|E_{2}|3\rangle=0$ as well as $\langle0-1|E_{2}|4\rangle=0$, which leads to $b_{0,3}=b_{3,0}=b_{1,3}=0$ and $b_{0,4}=b_{4,0}=b_{1,4}=0$, respectively.

For $|\phi_{1}\rangle$ and $|\phi_{9}\rangle$, we have $\langle1|I_{1}|0+1+2\rangle \langle0-1|E_{2}|0+1+2+3+4\rangle =0$, i.e., $\langle0-1|E_{2}|0+1+2+3+4\rangle=0$. So $\langle0|E_{2}|0\rangle=\langle1|E_{2}|1\rangle$, then $b_{0,0}=b_{1,1}$. Similarly, for $|\phi_{2}\rangle$ and $|\phi_{9}\rangle$, we obtain $b_{0,0}=b_{2,2}$. By $|\phi_{9}\rangle$ and $|\phi_{7}\rangle$, $|\phi_{9}\rangle$ and $|\phi_{8}\rangle$, we have $\langle0+1+2+3+4|E_{2}(|2\rangle+\omega_{3}|3\rangle+\omega_{3}^{2}|4\rangle)=0$ and  $\langle0+1+2+3+4|E_{2}(|2\rangle+\omega_{3}^{2}|3\rangle+\omega_{3}|4\rangle)=0$. Then,
\begin{equation}
\begin{aligned}
b_{22}+\omega_{3}b_{33}+\omega_{3}^{2}b_{44}=&0,\\
b_{22}+\omega_{3}^{2}b_{33}+\omega_{3}b_{44}=&0.
\end{aligned}
\end{equation}

By using the following fact from linear algebra that if
\begin{equation}
\begin{aligned}
x_{1}+x_{2}+\cdots+x_{n}=&0,\\
x_{1}+\omega_{n}x_{2}+\cdots+\omega_{n}^{(n-1)}x_{n}=&0,\\
x_{1}+\omega_{n}^{2}x_{2}+\cdots+\omega_{n}^{2(n-1)}x_{n}=&0,\\
~~~~~~~~~~~~~~\vdots~~~~~~~~~~~~~~~~~~\\
x_{1}+\omega_{n}^{n-1}x_{2}+\cdots+\omega_{n}^{(n-1)(n-1)}x_{n}=&0,
\end{aligned}
\end{equation}
then $x_{1}=x_{2}=\cdots=x_{n}=0$, and if
\begin{equation}
\begin{aligned}
x_{1}+\omega_{n}x_{2}+\cdots+\omega_{n}^{(n-1)}x_{n}=&0,\\
x_{1}+\omega_{n}^{2}x_{2}+\cdots+\omega_{n}^{2(n-1)}x_{n}=&0,\\
~~~~~~~~~~~~\vdots~~~~~~~~~~~~~~~~~~~\\
x_{1}+\omega_{n}^{n-1}x_{2}+\cdots+\omega_{n}^{(n-1)(n-1)}x_{n}=&0,
\end{aligned}
\end{equation}
then $x_{1}=x_{2}=\cdots=x_{n}$, where $w_n=e^{\frac{2\pi\sqrt{-1}}{n}}$, and we obtain $b_{2,2}=b_{3,3}=b_{4,4}$. Hence, $b_{0,0}=b_{1,1}=b_{2,2}=b_{3,3}=b_{4,4}$.
Consequently, we have $E_{2}\propto I$, which means that Bob cannot start with a nontrivial measurement either.

In summary, both Alice and Bob can only start with a trivial OPLM, which implies that the nine orthogonal product states in Proposition $\ref{eg:35}$ are locally indistinguishable.
\end{proof}

\section{Proof of Lemma \ref{th:d1d2}} \label{app:2}

\begin{proof}
We need to show that neither Alice nor Bob can perform nontrivial measurements.

$(i)$ When Alice performs orthogonality-preserving local measurements, the corresponding POVM elements can be expressed as $E_{1}=(a_{i,j})_{i,j\in Z_{d_{1}}}$. For $|\psi\rangle,|\phi\rangle\in \mathcal{S}$, $|\psi\rangle\neq |\phi\rangle$, we have
	\begin{equation}\label{eq:12}
		\langle \psi| E_1\otimes I_2|\phi\rangle=0.
	\end{equation}

First, with respect to $|\phi_{i}\rangle$ and $|\phi_{i'}\rangle$, where $1\leq i\neq i' \leq d_{1}-1$, we have $\langle i|E_{1}|i'\rangle \langle0-i|I_{2}|0-i'\rangle=0$, i.e., $\langle i|E_{1}|i'\rangle=0$. Therefore, $a_{i,i'}=a_{i',i}=0$ for $1\leq i\neq i' \leq d_{1}-1$.

For $|\phi_{m}\rangle$ and $|\phi_{i+(d_{1}-1)}\rangle$, where $ 1\leq i\leq d_{1}-2$, $m=i+1$; $i=d_{1}-1$, $m=1$, we have $\langle m|E_{1}|0-i\rangle \langle0-m|I_{2}|m\rangle =0$, i.e., $\langle m|E_{1}|0-i\rangle=0$. Then $\langle m|E_{1}|0\rangle=\langle m|E_{1}|i\rangle=0$, from which we have $a_{0,m}=a_{m,0}=0$, $1\leq m \leq d_{1}-1$.

Last, for $|\phi_{i+(d_{1}-1)}\rangle$ and $|\phi_{2d_{2}-1}\rangle $, where $ 1\leq i \leq d_{1}-2,~m=i+1$; $i=d_{1}-1,~m=1$, we have $\langle 0-i|E_{1}|0+\cdots+(d_{1}-1)\rangle \langle m|I_{2}|0+\cdots+(d_{2}-1)\rangle =0$. Then $\langle 0-i|E_{1}|0+\cdots+(d_{1}-1)\rangle=0$, i.e., $\langle 0|E_{1}|0\rangle=\langle i|E_{1}|i\rangle$, and thus $a_{0,0}=a_{i,i}$ for $1\leq i \leq d_{1}-1$.

Thus $E_{1}\propto I$, which implies that Alice cannot perform nontrivial measurements.

$(ii)$ When Bob performs orthogonality-preserving local measurements, the corresponding POVM elements can be expressed as $E_{2}=(b_{i,j})_{i,j \in Z_{d_{2}}}$. For $|\psi\rangle,|\phi\rangle\in \mathcal{S}$, $|\psi\rangle\neq |\phi\rangle$, we have
	\begin{equation}\label{eq:13}
		\langle \psi| I_1\otimes E_2|\phi\rangle=0.
	\end{equation}

First, for $|\phi_{i+(d_{1}-1)}\rangle$ and $|\phi_{i'+(d_{1}-1)}\rangle$, where $i\neq i',~m\neq m'$, we get $\langle0-i|I_{1}|0-i'\rangle \langle m|E_{2}|m'\rangle =0$, i.e., $\langle m|E_{2}|m'\rangle=0$. Then $b_{m,m'}=b_{m',m}=0$ for $1\leq m\neq m'\leq d_{1}-1$.

For $|\phi_{j+(d_{1}-1)}\rangle$ and $|\phi_{j'+(d_{1}-1)}\rangle$, $d_{1}\leq j\neq j' \leq d_{2}-1$, similarly we have $\langle0-1|I_{1}|0-1\rangle \langle j|E_{2}|j'\rangle =0$, i.e., $\langle j|E_{2}|j'\rangle=0$. Then $b_{j,j'}=b_{j',j}=0$ for $d_{1}\leq j\neq j' \leq d_{2}-1$.

Next from the states $|\phi_{i}\rangle$ and $|\phi_{i+(d_{1}-1)}\rangle$, where $ 1\leq i \leq d_{1}-1,~1\leq m \leq d_{1}-1$, we have $\langle i|I_{1}|0-i\rangle \langle 0-i|E_{2}|m\rangle =0$, which means $\langle 0|E_{2}|m\rangle=\langle i|E_{2}|m\rangle$. Hence, $b_{0,m}=b_{m,0}=b_{i,m}=0,~1\leq i \leq d_{1}-1$ for $1\leq m\leq d_{1}-1$.

For $|\phi_{i+(d_{1}-1)}\rangle$ and $|\phi_{j+(d_{1}-1)}\rangle$, $1\leq i \leq d_{1}-1,~d_{1}\leq j \leq d_{2}-1$, it is direct to find that $\langle0-i|I_{1}|0-1\rangle \langle m|E_{2}|j\rangle=0$. Then we have $\langle m|E_{2}|j\rangle=0$, which means $b_{m,j}=b_{j,m}=0$ for $1\leq m \leq d_{1}-1$ and $d_{1}\leq j \leq d_{2}-1$.

For $|\phi_{1}\rangle$ and $|\phi_{j+(d_{1}-1)}\rangle$, $d_{1}\leq j \leq d_{2}-1$, we have the relation $\langle 1|I_{1}|0-1\rangle \langle0-1|E_{2}|j\rangle =0$, i.e., $\langle 0|E_{2}|j\rangle=\langle 1|E_{2}|j\rangle$. So $b_{0,j}=b_{j,0}=b_{1,j}=0$ for $d_{1}\leq j \leq d_{2}-1$.

For $|\phi_{i}\rangle$ and $|\phi_{2d_{2}-1}\rangle$, $1\leq i \leq d_{1}-1$, we have $\langle i|I_{1}|0+\cdots+(d_{1}-1)\rangle \langle0-i|E_{2}|0+\cdots+(d_{2}-1)\rangle =0$, i.e., $\langle 0-i|E_{2}|0+\cdots+(d_{2}-1)\rangle=0$. Then $b_{0,0}=b_{i,i}$ for $1\leq i \leq d_{1}-1$.

For $|\phi_{2d_{2}-1}\rangle$ and $|\phi_{s+(d_{1}+d_{2}-2)}\rangle$, $1\leq s \leq d_{2}-d_{1}$, we have $\langle 0+\cdots+(d_{1}-1)|I_{1}|0+1\rangle \langle0+\cdots+(d_{2}-1)|E_{2}(|2\rangle+\sum\limits_{t=1}^{d_{2}-d_{1}}
\omega_{d_{2}-d_{1}+1}^{st} |t+d_{1}-1\rangle) =0$, i.e., $\langle 0+\cdots+(d_{2}-1)|E_{2}(|2\rangle+\sum\limits_{t=1}^{d_{2}-d_{1}}
\omega_{d_{2}-d_{1}+1}^{st}|t+d_{1}-1\rangle)=0$. Then
\begin{equation}\small
\begin{aligned}
b_{2,2}+\omega_{d_{2}-d_{1}+1}b_{d_{1},d_{1}}+\cdots+\omega_{d_{2}-d_{1}+1}^{d_{2}-d_{1}}b_{d_{2}-1,d_{2}-1}=&0,\\
b_{2,2}+\omega_{d_{2}-d_{1}+1}^{2}b_{d_{1},d_{1}}+\cdots+\omega_{d_{2}-d_{1}+1}^{2(d_{2}-d_{1})}b_{d_{2}-1,d_{2}-1}=&0,\\
~~~~~~~~~~~~~~~~~~~\vdots~~~~~~~~~~~~~~~~~~~~~~~~~~~~~~~~~\\
b_{2,2}+\omega_{d_{2}-d_{1}+1}^{d_{2}-d_{1}}b_{d_{1},d_{1}}+\cdots+\omega_{d_{2}-d_{1}+1}^{(d_{2}-d_{1})(d_{2}-d_{1})}b_{d_{2}-1,d_{2}-1}=&0.\\
\end{aligned}
\end{equation}

According to the results from linear algebra, we have $b_{2,2}=b_{d_{1},d_{1}}=\cdots=b_{d_{2}-1,d_{2}-1}$, namely, $E_{2}\propto I$, which means that Bob cannot perform nontrivial measurements either.

In summary, both Alice and Bob cannot perform nontrivial OPLM, so the $2d_{2}-1$ orthogonal product states in Lemma $\ref{th:d1d2}$ are locally indistinguishable.
\end{proof}
\end{document}